\journal{Theoretical Computer Science}
\newcommand{\hask}[1]{{\lstinline[language=Haskell, basicstyle=\sffamily, columns=fullflexible, keywords={data, deriving, via, class, type, where}, emph={Eq, Ord, Nominal}, emphstyle=\slshape]+#1+}}
\DeclareMathOperator{\Sym}{Sym}
\DeclareMathOperator{\ext}{ext}
\DeclareMathOperator{\Nsize}{N}
\DeclareMathOperator{\im}{Im}
\newcommand{\Q}{\mathbb{Q}}
\newcommand{\N}{\mathbb{N}}
\newcommand{\lang}{\mathcal{L}}
\newcommand{\Lmax}{\lang_{\text{max}}}
\newcommand{\Lint}{\lang_{\text{int}}}
\newcommand{\cplusplus}{C\texttt{++}}
\newcommand{\ONS}{\textsc{Ons}}
\newcommand{\ONShs}{\textsc{Ons-hs}}
\newcommand{\ONSboth}{\textsc{Ons(-hs)}}
\newcommand{\NLambda}{\textsc{N}$\lambda$}
\newcommand{\LOIS}{\textsc{Lois}}
\newcommand{\LStar}{\textsc{L}$^\star$}
\newcommand{\nLStar}{$\nu$\LStar}
\newcommand{\extendset}[1]{\llbracket #1 \rrbracket}
\newcommand{\extendproduct}[1]{\llbracket #1 \rrbracket}
\newtheorem{theorem}{Theorem}[section]
\newtheorem{lemma}[theorem]{Lemma}
\newtheorem{corollary}[theorem]{Corollary}
\newtheorem{proposition}[theorem]{Proposition}
\theoremstyle{definition}
\newtheorem{example}[theorem]{Example}
\newtheorem{remark}[theorem]{Remark}
\newtheorem{definition}[theorem]{Definition}
\begin{document}

\begin{frontmatter}

\title{Fast Computations on Ordered Nominal Sets\tnoteref{mytitlenote}}

\tnotetext[mytitlenote]{
This is a revised and extended version of a paper which appeared in the proceedings
    of ICTAC 2018~\cite{VenhoekMR18}.
    The research leading to these results has received funding from the European Union's Horizon 2020 research and innovation programme under the Marie Sk\l{}odowska-Curie Grant Agreement No. 795119 and the ERC AdG project 787914 FRAPPANT.
}

\author[run]{David Venhoek}
\ead{david@venhoek.nl}
\author[ou]{Joshua Moerman}
\ead{joshua.moerman@ou.nl}
\author[run]{Jurriaan Rot}
\ead{jrot@cs.ru.nl}

\address[run]{Institute for Computing and Information Sciences,\\
Radboud Universiteit, Nijmegen, The Netherlands}
\address[ou]{Open Universiteit, Heerlen, The Netherlands}

\begin{abstract}
Nominal automata are models for recognising
languages over infinite alphabets,
based on the
algebraic notion of \emph{nominal set}.
Motivated by their use in automata theory,
we show how to compute efficiently with nominal sets over the
so-called \emph{total order symmetry}, a variant which allows to compare alphabet
letters for equality as well as their respective order.
We develop an explicit finite representation of such nominal sets
and basic constructions thereon.
The approach is implemented
as the library \ONS{} (Ordered Nominal Sets),
enabling programming with infinite sets.
Returning to our motivation of nominal automata, we evaluate \ONS{} in two
applications: minimisation of automata
and active automata learning. In both cases, \ONS{} is competitive compared to
existing implementations and outperforms them for certain classes of inputs.
\end{abstract}

\begin{keyword}
nominal sets \sep automata theory \sep minimisation \sep automata learning
\end{keyword}

\end{frontmatter}

\section{Introduction}

Automata over infinite alphabets are natural models for
programs with unbounded data domains.
Such automata, often formalised as \emph{register automata}~\cite{KaminskiF94}, are applied in the modelling and analysis of communication protocols, hardware, and software systems
(see~\cite{bojanczyk2014automata,DAntoniV17,DBLP:journals/corr/abs-1209-0680,KaminskiF94,MontanariP97,Segoufin06} and references therein).
Typical infinite alphabets include sequence numbers, timestamps and identifiers.
This means that one can model data flow in such automata beyond the basic control flow provided by ordinary automata.
Recently, it has been shown 
in a series of papers that such models are amenable to
learning~\cite{AartsFKV15,BolligHLM13,CasselHJS16,DrewsD17,moerman2017learning,Vaandrager17}
with the verification of (closed source) implementations of the Transmission Control Protocol (TCP) as a prominent example~\cite{Fiterau-Brostean16}.

A foundational approach to
infinite alphabets is provided by the algebraic notion of
\emph{nominal set}. In computer science, nominal sets were originally introduced 
as an elegant formalism for name binding~\cite{GabbayP02,pitts2016nominal}.
They have been used in a variety of applications in
semantics, computation and concurrency theory; see~\cite{pitts2013nominal} for an overview. 

In particular, Boja{\'n}czyk et al.\ introduce \emph{nominal automata}, which allow one to model languages over infinite
alphabets modelled as nominal sets~\cite{bojanczyk2014automata}.
Nominal automata are defined as ordinary automata by replacing finite sets with ``orbit-finite nominal sets''.
This means that the state space of a nominal automaton is finite up to renaming of the infinite data occuring in the alphabet. 
As a consequence of orbit-finiteness, one can represent these automata finitely and compute with them---for instance,
emptiness and equivalence of deterministic automata can be decided with a slight adaptation of the classical algorithms. 
Nominal automata are equally expressive as register automata: the connection between these models is well
exposed by Boja\'{n}czyk \cite{Bojanczyk18}. Nominal automata thereby provide an elegant mathematical foundation
of register automata and automata over infinite alphabets.

Important for applications of nominal sets and automata are implementations.
A couple of tools exist to compute with nominal sets.
Notably, \NLambda{} \cite{EPTCS207.3} and \LOIS{} \cite{kopczynski1716,kopczynski2017} provide a general purpose programming language to manipulate infinite sets.%
\footnote{Other implementations of nominal techniques that are less directly related to our setting (Mihda, Fresh OCaml and Nominal Isabelle) are discussed in Section~\ref{sec:related}.}
Both tools are based on SMT solvers and use logical formulas to represent the infinite sets.
These implementations are very flexible, and the SMT solver does most of the heavy lifting, which makes the implementations themselves relatively straightforward.
Unfortunately, this comes at a cost as SMT solving is in general {\scshape PSPACE}-hard. 
Since the formulas used to describe sets tend to grow as more calculations are done, running times can become unpredictable.

The overall aim of the current paper is to provide a library
for computing with nominal sets that avoids this implicit representation 
and allows for a precise complexity analysis of algorithms based on it.
The key idea is to represent a nominal set explicitly in terms of its orbits,
which then also provides a natural notion of size---e.g., in a nominal automaton, the number of states is typically
taken to be the number of orbits of the state space. 
Towards this aim, we will focus on a specific variant of nominal sets, which we recall first. 

The approach of Boja{\'n}czyk et al.\ is based on nominal sets over different ``symmetries'',
which stipulate the way in which data values can be compared. As a consequence,
their results on nominal automata are parametric in the structure 
of the data values. Two important such structures on data values are:
\begin{itemize}
	\item
	Data values that can only be compared for equality, referred to as nominal sets
	over the \emph{equality symmetry}.

	\item
	Ordered data values, where data values are rational numbers and can be compared for their order; referred to as the \emph{total order symmetry}.
\end{itemize}

Most of the classical theory and applications of nominal sets are based on the equality symmetry.
In particular, this is the data domain which is most relevant to capture name binding
and it was the first data domain considered for register automata.

In this paper, however, we focus specifically on nominal sets over the total order symmetry,
for the following reasons.
First, the total order symmetry is of interest on its own: allowing to compare data
values allows for interesting examples such as data structures (priority queues and search trees use order)~\cite{CasselHJS16} and data nets with a dense linear order~\cite{LazicNORW08}.
Second, it is a natural example of a \emph{homogeneous structure}, which is a property needed for many decidability problems.
For instance, various problems on data nets are decidable for (strongly) homogeneous structures~\cite{LasotaP18}, and automata learning of register automata is feasible for $\infty$-extendable words~\cite{CasselHJS16} (a notion related to homogeneity).
Without homogeneity, these results are not know to hold.
Third, nominal sets over the total order symmetry are fairly general: they
subsume nominal sets over the equality symmetry. Note however
that representing nominal sets over the equality symmetry this way
requires more orbits in general. 
Fourth, as we will show, nominal sets over the total order symmetry allow for a remarkably simple representation.
The key insight is that the representation of nominal sets from~\cite{bojanczyk2014automata} (and also~\cite{CianciaKM10}) simplifies as follows;
the ``local symmetries'' are trivial, so that each orbit is presented solely by a natural number, indicating the number of variables or registers.

\noindent \\
Our main contributions include the following.
\begin{itemize}
	\item
	We develop the \emph{representation theory} of nominal sets over the total order symmetry.
	We give concrete representations of nominal sets, their products and equivariant maps.
	\item
	We provide \emph{time complexity bounds}
	for operations on nominal sets such as intersections and membership.
	Using those results we give the time complexity of Moore's minimisation algorithm (generalised to nominal automata) and prove that it is polynomial in the number of orbits.
	\item
	Using the representation theory, we are able to \emph{implement nominal sets in a \cplusplus{} library} \ONS{}.
	The library includes all the results from the representation theory (sets, products and maps).

	We also developed a Haskell implementation, called \ONShs{}.
	This allows us to give a more comprehensive evaluation than with the \cplusplus{} implementation alone, avoiding results that are caused by specific implementation details.
	Further, the Haskell implementation is \emph{generic}, meaning that nominal computations are possible even with custom data types.
	
	\item
	We \emph{evaluate the performance} of \ONSboth{}, and compare it to \NLambda{} and \LOIS{},
	using two algorithms on nominal automata: minimisation~\cite{BojanczykL12} and 
	automata learning \cite{moerman2017learning}.
	We use randomly generated automata as well as concrete, logically structured models such as FIFO queues.
	For random automata, our methods are considerably faster in most cases than the other tools.
	On the other hand, \LOIS{} and \NLambda{} are faster in minimising the structured automata as they exploit their logical structure.
	In automata learning, the logical structure is not available a-priori,
	and \ONSboth{} is faster in most cases.
\end{itemize}

The structure of the paper is as follows. 
The first three sections contain background material:
Section~\ref{sec:nomsets} on nominal sets,
Section~\ref{sec:aut} on nominal automata, and 
Section~\ref{subsec:nomorbit} on representation of nominal sets. 
Next, Section~\ref{sec:total-order}
describes the concrete representation of nominal sets, equivariant
maps and products in the total order symmetry. 
The implementations in \cplusplus{} and Haskell are presented in Sections~\ref{sec:impl} and~\ref{sec:haskell} respectively.
Complexity results are presented in Section~\ref{sec:complexity}. 
Section~\ref{sec:evaluation} reports on the evaluation of \ONS{} on algorithms for nominal automata. 
Related work is discussed in Section~\ref{sec:related},
and future work in Section~\ref{sec:fw}.

The current paper extends the conference version (ICTAC 2018~\cite{VenhoekMR18}) 
with proofs of all results, new experiments for evaluating~\ONS{} based on randomly generated formulas, 
and an implementation in Haskell,~\ONShs{}.

\section{Nominal sets}\label{sec:nomsets}

Nominal sets are infinite sets that carry certain symmetries, allowing a finite representation
in many interesting cases. We recall their
formalisation in terms of group actions, 
following~\cite{bojanczyk2014automata,pitts2013nominal},
to which we refer for an extensive introduction.

\subsection{Group actions.}\label{sec:g-sets}

Let $G$ be a group, with the multiplication denoted by juxtaposition
and the unit by $1$. 
Given a set $X$,
a \emph{(right) $G$-action} is a function ${\cdot}\colon X \times G \to X$ satisfying $x \cdot 1 = x$ and $(x \cdot g) \cdot h = x \cdot (g h)$ for all $x \in X$ and $g, h \in G$.
A set $X$ with a $G$-action is called a \emph{$G$-set} and we often write $xg$ instead of $x \cdot g$.
The \emph{orbit} of an element $x \in X$ is the set $\{xg \mid g \in G\}$.
A $G$-set is always a disjoint union of its orbits (in other words, the orbits partition the set).
We say that $X$ is \emph{orbit-finite} if it has finitely many orbits, and we denote the number of orbits by $\Nsize(X)$.

A map $f \colon X\rightarrow Y$ between $G$-sets is called 
\emph{equivariant} if it preserves the group action, i.e., for all $x\in X$ and $g \in G$ we have $f(xg) = f(x)g$.
If an equivariant map $f$ is bijective, then $f$ is an \emph{isomorphism} and we write $X \cong Y$.
A subset $Y \subseteq X$ is called \emph{equivariant} if for all $y \in Y$ and $g \in G$, we have $y g \in Y$. 
The \emph{product} of two $G$-sets $X$ and $Y$ is given by the Cartesian
product $X \times Y$ with the pointwise group action on it, i.e., $(x,y)g = (xg,yg)$.
Union and intersection of $X$ and $Y$ are well-defined if the two
actions agree on their common elements. 

\subsection{Nominal sets.} 

A \emph{data symmetry} is a pair $(\mathcal{D},G)$ where $\mathcal{D}$
is a set and $G$ is a subgroup of $\Sym(\mathcal{D})$, the group 
of bijections on $\mathcal{D}$.
Note that the group $G$ acts on $\mathcal{D}$ 
by defining $x g = g(x)$.\footnote{This is a well-defined action if we use the group multiplication 
$f \cdot g = g \circ f$.}
In the most studied instance, called the \emph{equality symmetry}, 
$\mathcal{D}$ is a countably infinite set and $G = \Sym(\mathcal{D})$. In this paper, 
we will mostly focus on the \emph{total order symmetry}
given by $\mathcal{D} = \mathbb{Q}$ and $G = \{\pi \mid \pi \in \Sym(\mathbb{Q}), \pi \text{ is monotone}\}$.

Let $(\mathcal{D}, G)$ be a data symmetry and $X$ be a $G$-set.
A finite set of data values $S\subseteq \mathcal{D}$ is called a \emph{support} of an element $x \in X$ 
if for all $g\in G$ with $\forall s \in S\colon sg = s$ we have $xg = x$.
A $G$-set $X$ is called \emph{nominal} if every element $x\in X$ has a (necessarily finite) support.

\begin{example}
We list several examples for the total order symmetry $(\mathbb{Q}, G)$.
The set $\mathbb{Q}^2$ is a $G$-set, with the action defined by
$(q_1, q_2) g = (g(q_1), g(q_2))$ for all $g \in G$.  
It is nominal, as each element  $(q_1, q_2) \in \mathbb{Q}^2$ has the finite set $\{q_1, q_2\}$ as a support.
Indeed, if a monotone bijection $g \in G$ is the identity on $q_1$ and $q_2$ then we have
$(q_1, q_2) g = (g(q_1), g(q_2)) = (q_1, q_2)$ as required.
The set $\mathbb{Q}^2$ has the following three orbits: 
\[
\{(q_1, q_2) \mid q_1 < q_2\}\,, ~
\{(q_1, q_2) \mid q_1 > q_2\} \,, ~
 \{(q_1, q_2) \mid q_1 = q_2\} \,.
\]
To see this, note that each of these sets is closed under monotone permutations, and
their union is the full set $\mathbb{Q}^2$. 

For a set $X$, the set of all subsets of size $n \in \mathbb{N}$ is denoted by
\[ \mathcal{P}_n(X) = \{Y \subseteq X \mid |Y| = n\} \,. \]
The set $\mathcal{P}_n(\mathbb{Q})$ is a single-orbit nominal set for each $n$, with the action
defined by direct image: $Y g = \{ yg \mid y \in Y \}$. Every set $Y \in \mathcal{P}_n(X)$
is a support of itself. 

The group of monotone bijections also acts by direct image on the full power set $\mathcal{P}(\mathbb{Q})$,
but this is \emph{not} a nominal set. For instance, the set $\mathbb{Z} \in \mathcal{P}(\mathbb{Q})$ of integers 
has no finite support.
\end{example}

If $S\subseteq \mathcal{D}$ is a support of an element $x \in X$, then any finite set $S'\subseteq \mathcal{D}$ such that $S\subseteq S'$ is also a support of $x$.
A set $S\subseteq \mathcal{D}$ is a \emph{least support} of $x \in X$ if it is a support of $x$ and 
$S \subseteq S'$ for any support $S'$ of $x$.
The existence of least supports is crucial for representing orbits. 
Unfortunately, even when elements have a finite support, in general they do not always have a least support.
A data symmetry $(\mathcal{D},G)$ is said to \emph{admit least supports} if every element of every nominal set 
has a least support. Both the equality and the total order symmetry admit least supports.
For other (counter)examples of data symmetries admitting least supports, see~\cite{bojanczyk2014automata}.
Having least supports is useful for a finite representation.

Given a nominal set $X$, the size of the least support of an element $x \in X$ is denoted by $\dim(x)$, the \emph{dimension} of $x$.
We note that all elements in the orbit of $x$ have the same dimension.
For an orbit-finite nominal set $X$, we define 
\[ \dim(X) = \max \{ \dim(x) \mid x \in X \} \,. \]
For a single-orbit nominal set $O$, observe that $\dim(O) = \dim(x)$ where $x$ is any element $x \in O$.

\section{Automata over Nominal Sets}
\label{sec:aut}

In this section we recall the notion of nominal automata, which allow to recognise languages
over infinite alphabets. 
Nominal automata are defined as ordinary automata by replacing the finite state space and finite alphabet
by orbit-finite nominal sets.

The theory of nominal automata is developed in~\cite{bojanczyk2014automata}, where it is shown that many algorithms from automata theory transfer to nominal automata.
For instance, emptiness and equivalence of deterministic automata can be decided with a slight adaptation of the classical algorithms.
Nonetheless, not all algorithms generalise: equivalence of non-deterministic automata is undecidable in the nominal setting.
Below, we also briefly describe minimisation and learning of nominal automata. 
We start with an introductory example of a language over the rational numbers. 

\begin{example}\label{ex:int-lang}
Consider the following language on rational numbers:
\[ \Lint = \{ a_1 b_1 \cdots a_n b_n \mid a_i, b_i \in \Q, a_i < a_{i+1} < b_{i+1} < b_i \text{ for all } i \}. \]
We call this language the \emph{interval language} as a word $w \in \Q^{\ast}$ is in the language when it denotes a sequence of nested intervals.
For instance, the word $w = 0\, 2\, 1\, \frac{3}{2}$ is in $\Lint$, but $w' = 0\, 2\, 1\, 1$ is not.
This language contains arbitrarily long words.
For this language it is crucial to work with an infinite alphabet as for each finite set $C \subset \Q$, the restriction $\Lint \cap C^{\ast}$ is just a finite language.
Note that the language is equivariant: $w \in \Lint \iff wg \in \Lint$ for any monotone bijection $g$, because nested intervals are preserved by monotone maps.%
\footnote{The $G$-action on words is defined point-wise: $(w_1 \ldots w_n) g = (w_1 g) \ldots (w_n g)$.}
Indeed, $\Lint$ is a nominal set, although it is not orbit-finite.

Informally, the language $\Lint$ can be accepted by the automaton depicted in Figure~\ref{fig:lint-minimal}.
Here we allow the automaton to store rational numbers and compare them to new symbols.
The input on a transition is denoted by the variable $a, b$ or $c$ which does not already appear in the source state.
This input may be constrained by a guard.
For example, the transition from $q_2$ to $q_3$ is taken if any value $c$ between $a$ and $b$ is read and then the currently stored value $a$ is replaced by $c$.
For any other value read at state $q_2$ the automaton transitions to the sink state $q_4$.
Such a transition structure is made precise by the notion of nominal automaton.

\begin{figure}
\centering
\scalebox{.85}{
\begin{tikzpicture}[-{Latex}]
\node (S 0) at (0,0) [state, minimum size=1.5cm] {$q_0$};
\node (S 1) at (3,0) [state, minimum size=1.5cm] {$q_1(a)$};
\node (S 2) at (7,0) [state, minimum size=1.5cm, accepting] {$q_2(a,b)$};
\node (S 3) at (11,0) [state, minimum size=1.5cm] {$q_3(a,b)$};
\node (S 4) at (7,-2.5) [state, minimum size=1.5cm] {$q_4$};
\node (S 5) at (11,-2.5) [state,minimum size=1.5cm,draw=none] {};

\path
(S 0) edge node[above] {$a$} (S 1)
(S 1) edge node[above] {$b > a$} (S 2)
(S 1) edge[bend right] node[below left] {$b \le a$} (S 4)
(S 2) edge[bend left=10] node[above,align=center] {$a<c<b$\\$a\leftarrow c$} (S 3)
(S 3) edge[bend left=10] node[below,align=center] {$a<c<b$\\$b\leftarrow c$} (S 2)
(S 2) edge[bend right=15] node[left] {$c\le a$} (S 4)
(S 2) edge[bend left=15] node[right] {$c\ge b$} (S 4)
(S 3) edge[bend left] node[above left] {$c\le a$} (S 4)
(S 3) edge[bend left=60] node[below right] {$c\ge b$} (S 4);
\draw[-{Latex},rounded corners=5] (S 4.south west) -- ([yshift=-.5cm]S 4.south west) -- node[midway,below]{$a$} ([yshift=-.5cm]S 4.south east) -- (S 4.south east);
\end{tikzpicture}
}
\caption{Example automaton that accepts the language $\Lint$.}
\label{fig:lint-minimal}
\end{figure}
\end{example}

\begin{remark}
For the reader familiar with \emph{register automata}, Figure~\ref{fig:lint-minimal} resembles a register automaton.
One notable difference to the classical definition of register automaton is that we allow each state to have a different set of registers.
In this example, $q_0$ and $q_4$ have no registers, $q_1$ has a single register and the other states have two registers.
This way, the values in the registers are always defined.

In general, register automata and nominal automata are equally expressive; a clear
comparison is given by Boja{\'n}czyk~\cite{Bojanczyk18}. Each state in a register automaton
gives rise to one or more orbits in an equivalent nominal automaton. In fact, the number of orbits
may be exponential in the number of states of the register automaton. These orbits represent distinct
relations between registers.
\end{remark}

\begin{definition}
A \emph{nominal language} is an equivariant subset $L \subseteq A^{*}$ where $A$ is an orbit-finite nominal set.
\end{definition}

\begin{definition}
A \emph{nominal deterministic finite automaton} is a tuple $(S,A,F,\delta)$, where $S$ is an orbit-finite nominal set of states, $A$ is an orbit-finite nominal set of symbols, $F \subseteq S$ is an equivariant subset of final states and $\delta \colon S \times A \to S$ is the equivariant transition function.

Given a state $s \in S$, we define the usual acceptance condition: a word $w \in A^{*}$ is \emph{accepted} if $w$ denotes a path from $s$ to a final state.
\end{definition} 

The automaton in Figure~\ref{fig:lint-minimal} can be formalised as a nominal deterministic finite automaton as follows.
Let
\[ S = \{ q_0, q_4 \} \cup \{ q_1(a) \mid a \in \Q \} \cup \{ q_2(a, b) \mid a < b \in \Q \} \cup \{ q_3(a, b) \mid a < b \in \Q \} \]
be the set of states, where the group action is defined as one would expect.
The transition we described earlier can now be defined formally as
\[ \delta(q_2(a,b), c) = q_3(c,b) \quad \text{for all } a < c < b \in \Q \,. \]
By defining $\delta$ on all states accordingly and defining the final states as
\[ F = \{ q_2(a, b) \mid a < b \in \Q \} \,, \]
we obtain a nominal deterministic automaton $(S, \Q, F, \delta)$.
The state $q_0$ accepts the language $\Lint$.

\subsection{Minimisation of Nominal Automata}
\label{sec:minimisation}

For languages recognised by nominal DFAs, a Myhill-Nerode theorem holds which relates states to right congruence classes \cite{bojanczyk2014automata}.
This guarantees the existence of unique minimal automata.
We say an automaton is \emph{minimal} if its set of states has the least number of orbits and each orbit has the smallest dimension possible.%
\footnote{Abstractly, an automaton is minimal if it has no proper quotients. Minimal deterministic automata are unique up to isomorphism.}

\begin{example}\label{ex:minimal-automaton}
Consider the language 
\[ \Lmax = \{ w a \in \Q^{*} \mid a = \max(w_1, \dots, w_n) \} \]
consisting of those words where the last symbol is the maximum of previous symbols.
Figure~\ref{fig:lmax-non-minimal} depicts a nominal automaton accepting $\Lmax$,
which is however not minimal.
Figure~\ref{fig:lmax-minimal} is the minimal nominal automaton accepting $\Lmax$.
\begin{figure}
	\centering
	\scalebox{0.8}{
		\begin{tikzpicture}[-{Latex}]
		\node (S 0) at (0, 0) [state, minimum size=1.5cm] {$q_0$};
		\node (S 1) at (3, 0) [state, minimum size=1.5cm] {$q_1(a)$};
		\node (S 2) at (7.5, 0) [state, minimum size=1.5cm, accepting] {$q_2(a,a)$};
		\node (S 3) at (7.5, 2.5) [state, minimum size=1.5cm] {};
		\node at (7.5, 2.5) {\begin{tabular}{c}$q_3(a,b)$\\ $b > a$\end{tabular}};
		\node (S 4) at (7.5, -2.5) [state, minimum size=1.5cm] {};
		\node at (7.5,-2.5) {\begin{tabular}{c}$q_4(a,b)$\\ $b < a$\end{tabular}};

		\draw ([xshift=-0.5cm]S 0.west) -- (S 0.west);
		\path (S 0) edge node[midway,above] {$a$} (S 1);
		\path (S 1) edge [bend left] node[midway,sloped,above] {$b>a$} (S 3);
		\path (S 1) edge node[midway,above] {$a$} (S 2);
		\path (S 1) edge [bend right] node[midway,sloped,below] {$b<a$} (S 4);
		\path (S 2) edge[bend left] node[midway, left] {$b>a$} (S 3);
		\path (S 3) edge[bend left] node[midway,right] {\begin{tabular}{l}$b$\\ $a\leftarrow b$\end{tabular}} (S 2);
		\path (S 2) edge[bend left] node[midway,right] {$b<a$} (S 4);
		\path (S 4) edge[bend left] node[midway, left] {$a$} (S 2);
		\path (S 3.north west) edge[bend left=80, looseness=2.2] node[midway,above]{\begin{tabular}{c}$c>b$\\$a\leftarrow b$\\$b\leftarrow c$\end{tabular}} (S 3.north east);
		\path (S 4.south east) edge[bend left=80, looseness=2.2] node[midway,below]{\begin{tabular}{c}$c<a$\\$a\leftarrow a$\\$b\leftarrow c$\end{tabular}} (S 4.south west);
		\path (S 2.north east) edge[bend left=80, looseness=2.2] node[midway,right]{$a$} (S 2.south east);
		\path (S 3) edge[bend left=90, looseness=2] node[midway,right]{\begin{tabular}{r}$c<b$\\$a\leftarrow b$\\$b\leftarrow c$\end{tabular}} (S 4);
		\path (S 4) edge[bend right=80, looseness=1.9] node[midway,left]{\begin{tabular}{l}$c>a$\\$a\leftarrow a$\\$b\leftarrow c$\end{tabular}} (S 3);
		\end{tikzpicture}
	}
	\caption{Example automaton that accepts the language $\Lmax$.}
	\label{fig:lmax-non-minimal}
\end{figure}

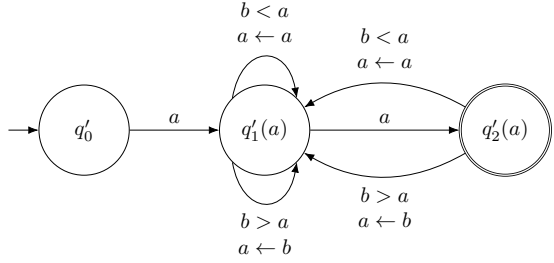
\begin{figure}
\centering
\scalebox{0.8}{
\begin{tikzpicture}[-{Latex}]
\node (S 0) at (0,0) [state, minimum size=1.5cm] {$q'_0$};
\node (S 1) at (3,0) [state, minimum size=1.5cm] {$q'_1(a)$};
\node (S 2) at (7,0) [state, accepting, minimum size=1.5cm] {$q'_2(a)$};
\draw[-{Latex}] ([xshift=-0.5cm]S 0.west) -- (S 0.west);
\draw[-{Latex}] (S 0.east) -- node[midway,above] {$a$} (S 1.west);
\draw[-{Latex}] (S 1.east) -- node[midway,above] {$a$} (S 2.west);
\path (S 1.north west) edge[bend left=80, looseness=2.2] node[midway,above] {\begin{tabular}{c}$b<a$\\$a\leftarrow a$\end{tabular}} (S 1.north east);
\path (S 1.south west) edge[bend right=80, looseness=2.2] node[midway,below] {\begin{tabular}{c}$b>a$\\$a\leftarrow b$\end{tabular}} (S 1.south east);

\path (S 2) edge [bend right] node[midway,above] {\begin{tabular}{c}$b<a$\\$a\leftarrow a$\end{tabular}} (S 1);
\path (S 2) edge [bend left] node[midway,below] {\begin{tabular}{c}$b>a$\\$a\leftarrow b$\end{tabular}} (S 1);
\end{tikzpicture}
}
\caption{The automaton from Figure~\ref{fig:lmax-non-minimal}, minimised.}
\label{fig:lmax-minimal}
\end{figure}
\end{example}

There are several algorithms for minimising deterministic automata.
In this paper we focus on Moore's minimisation algorithm.
It generalises to nominal DFAs since it uses set operations which work just as well on nominal sets (see Algorithm~\ref{alg:moore}).
We will perform a complexity analysis in Section~\ref{sec:complexity} and later use this algorithm for testing our library.

\begin{algorithm}
\caption{Moore's minimisation algorithm for nominal DFAs}\label{alg:moore}
\begin{algorithmic}[1]
\Require{ Nominal automaton $(S,A,F,\delta)$.}
\State{$i \leftarrow 0$, ${\equiv_{-1}} \leftarrow S\times S$, ${\equiv_{0}} \leftarrow F\times F \cup (S\backslash F)\times (S\backslash F)$}
\While{${\equiv_i} \,\ne\, {\equiv_{i-1}}$}
\State{${\equiv_{i+1}} \leftarrow \{(q_1, q_2) \mid q_1 \equiv_i q_2 \,\wedge\, \forall a\in A, \delta(q_1,a) \equiv_i \delta(q_2,a) \}$}
\State{$i \leftarrow i+1$}
\EndWhile
\State{$E\leftarrow S/_{\equiv_i}$}
\State{$F_E \leftarrow \{e\in E \mid \forall s\in e, s\in F\}$}
\State{Let $\delta_E$ be the map such that, if $s\in e$ and $\delta(s,a)\in e'$, then $\delta_E(e,a) = e'$.}
\State{\Return{$(E,A,F_E,\delta_E)$.}}
\end{algorithmic}
\end{algorithm}

\subsection{Learning nominal automata}

Another interesting application is \emph{automata learning}.
The aim of automata learning is to infer an unknown regular language $\lang$.
We use the framework of active learning as set up by Dana Angluin \cite{Angluin87}
where a learning algorithm can query an oracle to gather information about $\lang$.
Formally, the oracle can answer two types of queries:
\begin{enumerate}
\item \emph{membership queries}, where a query consists of a word $w \in A^{*}$ and the oracle replies whether $w \in \lang$, and
\item \emph{equivalence queries}, where a query consists of an automaton $\mathcal{H}$ and the oracle replies positively if $\lang(\mathcal{H}) = \lang$ or provides a counterexample if $\lang(\mathcal{H}) \neq \lang$.
\end{enumerate}
With these queries, the \LStar{} algorithm can learn regular languages efficiently \cite{Angluin87}.
In particular, it learns the unique minimal automaton for $\lang$ using only finitely many queries.
The \LStar{} algorithm has been generalised to \nLStar{} in order to learn \emph{nominal} regular languages \cite{moerman2017learning}.
In particular, it learns a nominal DFA (over an infinite alphabet) using only finitely many queries.
The algorithm is not polynomial, unlike the minimisation algorithm described above.
However, the authors of \emph{loc. cit.} conjecture that there is a polynomial algorithm.\footnote{See \url{joshuamoerman.nl/papers/2017/17popl-learning-nominal-automata.html} for a sketch of the polynomial algorithm.}
For the correctness, termination and comparison with other learning algorithms see \emph{loc. cit.}

Learning register automata is an active research area with applications such as bug-finding in internet protocols \cite{Fiterau-Brostean18}; see~\cite{Vaandrager17} for other applications.
We will implement \nLStar{} to test our library in Section~\ref{sec:learning-results}.

\section{Representing nominal sets}\label{subsec:nomorbit}

In this section we sketch the representation of nominal sets according to~\cite{bojanczyk2014automata}.
We will not need all the details, but we include it to give a rough idea of the used techniques,
and to fix notation.
As noted in Section~\ref{sec:g-sets}, a $G$-set is always a disjoint union of its orbits.
In order to represent an orbit-finite $G$-set, it is hence sufficient to represent each orbit.

An orbit $X$ of a $G$-set can always be represented as a quotient of $G$.
To see this, pick some element $x \in X$ and consider the subgroup $H = \{ g \in G \mid x g = x \}$.
The set of cosets $G / H$ is acted on by $G$ via right-multiplication.
One can check that $X$ is isomorphic to $G / H$ as $G$-sets (for the isomorphism, send $x$ to the equivalence class of $1$).
So an orbit is defined by some subgroup $H$.

What is left is to represent the data $H$ is some finite way.
This is where we need the fact that $x$ has a least support $C$.
We use $C$ to define two subgroups of $H$.
First, there are elements which leave each $c \in C$ fixed, but may move elements outside of $C$.
Second, there are the ``local symmetries'' that act as the identity outside of $C$, but may permute the elements of $C$.
The subgroup $H$ is generated by these two subgroups.
We only need to know the local symmetries $S$.
The whole reduction from an abstract orbit to this subgroup is stated in the following theorem from~\cite{bojanczyk2014automata}.

\begin{theorem}\label{representation_tuple}
Let $X$ be a single-orbit nominal set for a data symmetry 
$(\mathcal{D}, G)$ that admits least supports
and let $C \subseteq \mathcal{D}$ be the least support of some element 
$x \in X$.
Then there exists a subgroup $S \le G|_C$ such that $X \cong [C,S]^{ec}$,
\end{theorem}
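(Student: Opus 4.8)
The plan is to exhibit the isomorphism $X \cong [C,S]^{ec}$ explicitly via the orbit–stabiliser correspondence, using the stabiliser of $x$ under the $G$-action. First I would set $S = \{\pi|_C \mid \pi \in G,\ C\pi = C,\ x\pi = x\}$, i.e., the restriction to $C$ of the stabiliser subgroup $\mathrm{Stab}(x) = \{g \in G \mid xg = x\}$, intersected with the setwise stabiliser of $C$. I claim $S \le G|_C$: it is clearly a subset of $G|_C$ by construction, and it is closed under composition and inverses because $\mathrm{Stab}(x)$ is a subgroup and, crucially, because any $g$ with $xg = x$ must satisfy $Cg = C$ — this is where \emph{least} supports are used, since $Cg$ is also a (least) support of $xg = x$, and least supports are unique, forcing $Cg = C$. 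Thus restriction to $C$ is a well-defined group homomorphism on $\mathrm{Stab}(x)$ and $S$ is its image.

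Next I would verify that $\ext_G(S)$ equals $\mathrm{Stab}(x)$. The inclusion $\mathrm{Stab}(x) \subseteq \ext_G(S)$ is immediate from the definition of $S$. For the converse, take $\pi \in G$ with $\pi|_C \in S$; then $\pi|_C = \sigma|_C$ for some $\sigma \in \mathrm{Stab}(x)$, so $\pi\sigma^{-1}$ fixes $C$ pointwise, hence fixes $x$ (as $C$ supports $x$), giving $x\pi = x\sigma = x$, so $\pi \in \mathrm{Stab}(x)$. Therefore $[C,S]^{ec}$ is precisely the set of right cosets $\{\mathrm{Stab}(x)\, g \mid g \in G\}$.

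Now I would define $f \colon [C,S]^{ec} \to X$ by $f(\mathrm{Stab}(x)\,g) = xg$ and check the standard orbit–stabiliser facts: $f$ is well defined and injective because $\mathrm{Stab}(x)\,g = \mathrm{Stab}(x)\,h \iff gh^{-1} \in \mathrm{Stab}(x) \iff xg = xh$; it is surjective because $X$ is a single orbit, so every element is $xg$ for some $g$; and it is equivariant because $f((\mathrm{Stab}(x)\,g)\,h) = f(\mathrm{Stab}(x)\,gh) = xgh = (xg)h = f(\mathrm{Stab}(x)\,g)\,h$, using that the $G$-action on right cosets is by right multiplication. Hence $f$ is an isomorphism of nominal sets and $X \cong [C,S]^{ec}$.

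The main obstacle is the well-definedness of $S$ as a subgroup of $G|_C$ — concretely, showing that $x\pi = x$ implies $C\pi = C$ so that restriction to $C$ behaves as a homomorphism on the stabiliser. This is exactly the point where the hypothesis that $(\mathcal{D},G)$ admits least supports is indispensable: without uniqueness of the least support one cannot conclude that the stabiliser of $x$ preserves $C$ setwise, and the coset description $[C,S]^{ec}$ would not match $\mathrm{Stab}(x)$. Everything else is a routine instance of the orbit–stabiliser theorem lifted to the category of nominal $G$-sets, where one additionally notes that equivariant bijections between nominal sets are automatically isomorphisms.
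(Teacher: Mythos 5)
Your proof is correct and follows essentially the same route as the paper's (sketched) argument: identify the orbit with the right cosets of $\mathrm{Stab}(x)$ and show that $\mathrm{Stab}(x) = \ext_G(S)$ for $S$ the restriction of the stabiliser to the least support $C$, using uniqueness of least supports to see that $\mathrm{Stab}(x)$ preserves $C$ setwise. In fact your write-up is more complete than the paper's proof sketch, which defers the details (in particular the decomposition of the stabiliser relative to $C$) to Boja\'{n}czyk et al.
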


\noindent
where we used the following notation.
The \emph{restriction} of a group $G$ to a subset $C \subseteq \mathcal{D}$ is defined as
\[ G|_C \,=\, \{\pi|_C \mid \pi \in G,\, C\pi = C\} \,, \]
where $\pi|_C$ is the restriction of the bijection $\pi \colon \mathcal{D} \to \mathcal{D}$ to the domain $C$.
The \emph{extension} of a subgroup (of local symmetries) $S \le G|_C$ is defined as
\[ \ext_G(S) \,=\, \{\pi \in G \mid \pi|_C\in S\} \,. \]
Finally, for $C \subseteq \mathcal{D}$ and $S \le G|_C$, we define the orbit of right-cosets
\[ [C,S]^{ec} \,=\, G / \ext_G(S) \,=\, \{\{sg \mid s \in \ext_G(S)\} \mid g\in G\} \,. \]

\section{Representation in the total order symmetry}
\label{sec:total-order}

This section develops a concrete representation (based on Theorem~\ref{representation_tuple}) of nominal sets over the total 
order symmetry, as well as equivariant maps and products.
From now on, by \emph{nominal set} we always refer to a nominal set over the total order symmetry. Hence, our data domain is $\Q$ and we take $G$ to be the group of monotone bijections.

\subsection{Orbits and nominal sets}\label{sec:orbits-and-nominal-set}

Our first observation is that the finite group of local symmetries $S$ in Theorem~\ref{representation_tuple} is always trivial, i.e., $S$ is the trivial group, $I=\{1\}$.
This follows from the following lemma and the fact that $S \le G|_C$.

\begin{lemma}\label{group_trivial}
For every finite subset $C \subset \mathbb{Q}$, we have $G|_C = I$.
\end{lemma}
\begin{proof}
Let $\pi \in G|_C$ be any element of $G|_C$. If $\pi$ is not the identity, then since $C$ is finite, there exists a smallest element $c\in C$ with $c\pi \ne c$. 
Since $\pi$ is a bijection mapping $C$ to $C$, we find $c\pi\pi \ne c\pi$ and $c\pi \in C$, hence $c < c\pi$. Furthermore, there exists some $c'\in C$ with $c'\pi = c$. Since by assumption $c' \ne c$, also $c < c'$. But then both $c < c'$ and $c\pi > c = c'\pi$, contradicting monotonicity of $\pi$.
Hence $\pi$ is the identity element, and $G|_C = I$.
\end{proof}

\noindent Immediately, we see that $(C,S) = (C,I)$, and hence that the orbit is fully represented by the set $C$. Together with Theorem~\ref{representation_tuple} this leads to a complete characterisation of $[C,I]^{ec}$ in Lemma~\ref{orbitrep}.
In its proof, we also need the following.
\begin{figure}
\centering
\begin{tikzpicture}
\draw (0,0) -- (7,0) node[right]{$\mathbb{Q}$};
\draw (0,2) -- (7,2) node[right]{$\mathbb{Q}$};

\draw (1,1.8) -- (1,2.2) node[above]{$C(1)$};
\draw (2,1.8) -- (2,2.2) node[above]{$C(2)$};
\draw (6,1.8) -- (6,2.2) node[above]{$C(3)$};

\draw (2,0.2) -- (2,-0.2) node[below]{$C'(1)$};
\draw (4,0.2) -- (4,-0.2) node[below]{$C'(2)$};
\draw (5,0.2) -- (5,-0.2) node[below]{$C'(3)$};

\draw[-{Latex}] (1,2) -- (2,0);
\draw[-{Latex}] (2,2) -- (4,0);
\draw[-{Latex}] (6,2) -- (5,0);

\draw[dotted] (0,1) -- (0.5,0);
\draw[dotted] (0,2) -- (1,0);
\draw[dotted] (0.5,2) -- (1.5,0);

\draw[dotted] (4/3,2) -- (8/3,0);
\draw[dotted] (5/3,2) -- (10/3,0);

\draw[dotted] (3,2) -- (4.25,0);
\draw[dotted] (4,2) -- (4.5,0);
\draw[dotted] (5,2) -- (4.75,0);

\draw[dotted] (6.5,2) -- (5.5,0);
\draw[dotted] (7,2) -- (6,0);
\draw[dotted] (7,1) -- (6.5,0);
\end{tikzpicture}
\caption{Visualisation of $\pi$ from Lemma~\ref{lem:support_homogeneity}}\label{fig:mappingvis}
\end{figure}
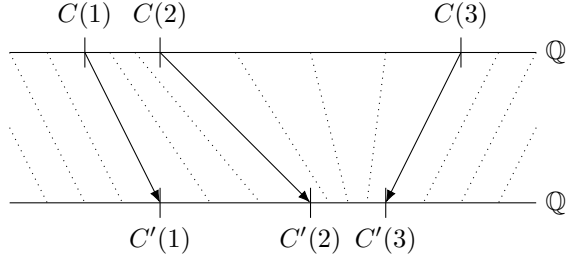

\begin{lemma}[Homogeneity]\label{lem:support_homogeneity}
For any two finite $C \subseteq\mathbb{Q}$, $C' \subseteq\mathbb{Q}$, if $|C| = |C'|$, then there is a $\pi\in G$ such that $C\pi = C'$.
\end{lemma}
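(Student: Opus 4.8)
The plan is to construct the monotone bijection $\pi$ explicitly by interpolating between the two finite sets in an order-preserving way, and then extending it linearly (or piecewise-linearly) on the gaps between consecutive points. First I would enumerate both sets in increasing order: write $C = \{c_1 < c_2 < \dots < c_n\}$ and $C' = \{c'_1 < c'_2 < \dots < c'_n\}$, which is possible since $\#C = \#C' = n$. The map must send $c_i \mapsto c'_i$ for each $i$, since any monotone bijection carrying $C$ onto $C'$ has to respect the orders on both sides. The remaining task is to fill in the rest of $\mathbb{Q}$.

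Next I would handle the three regions: below $c_1$, between consecutive $c_i$ and $c_{i+1}$, and above $c_n$. On each bounded interval $[c_i, c_{i+1}]$ we need a monotone bijection of $\mathbb{Q} \cap [c_i, c_{i+1}]$ onto $\mathbb{Q} \cap [c'_i, c'_{i+1}]$ fixing the endpoints appropriately; an affine map $x \mapsto c'_i + \frac{c'_{i+1}-c'_i}{c_{i+1}-c_i}(x - c_i)$ does the job, as it is a strictly increasing rational-coefficient map sending rationals to rationals bijectively. For the two unbounded rays, $(-\infty, c_1]$ and $[c_n, +\infty)$, I would use a translation, e.g.\ $x \mapsto x + (c'_1 - c_1)$ on the lower ray and $x \mapsto x + (c'_n - c_n)$ on the upper ray; these are monotone bijections of $\mathbb{Q}$ onto the respective rays and agree with the affine pieces at the boundary points. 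Gluing these pieces gives a well-defined function $\pi \colon \mathbb{Q} \to \mathbb{Q}$; one checks it is a bijection (piecewise bijections agreeing on shared endpoints, with matching images partitioning $\mathbb{Q}$) and strictly increasing (each piece is increasing and the pieces are arranged in order), hence $\pi \in G$, and by construction $C\pi = C'$. Figure~\ref{fig:mappingvis} illustrates exactly this kind of piecewise construction.

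The only mildly delicate point — and the thing I would be careful about rather than a genuine obstacle — is verifying that the glued map is globally monotone and bijective, in particular that at each junction point $c_i$ the left-hand piece and the right-hand piece agree (they both give $c'_i$) and that the images of adjacent pieces meet exactly at $c'_i$ without overlap or gap. Since all pieces are order-isomorphisms of intervals of $\mathbb{Q}$ and the decomposition of the domain into $(-\infty,c_1] \cup [c_1,c_2] \cup \dots \cup [c_n,\infty)$ mirrors the decomposition of the codomain into $(-\infty,c'_1] \cup [c'_1,c'_2] \cup \dots \cup [c'_n,\infty)$, this is routine. An alternative, slicker phrasing avoids even choosing affine maps: since any two countable dense-in-themselves linear orders without endpoints are isomorphic (Cantor's theorem), each open interval $(\c_i, c_{i+1}) \cap \mathbb{Q}$ is order-isomorphic to $(c'_i, c'_{i+1}) \cap \mathbb{Q}$, and similarly for the rays, so one can assemble $\pi$ from these isomorphisms together with the forced assignment on $C$ itself; I would likely present the explicit affine version since it is more concrete and self-contained.
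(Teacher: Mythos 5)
Your construction is essentially identical to the paper's: both enumerate $C$ and $C'$ in increasing order and define $\pi$ piecewise, using affine interpolation on the bounded intervals between consecutive support points and translations on the two unbounded rays, with monotonicity following from positivity of the slopes. The only cosmetic difference is that the paper certifies bijectivity by writing down the explicit inverse, whereas you argue it by gluing piecewise order-isomorphisms; both are fine.
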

\begin{proof}
This is shown through construction of $\pi$. Number the elements of $C$ from smallest to largest, such that $C(1)$ is the smallest element and $C(n)$ the largest. Do the same for $C'$. We define $\pi$ such that $C(i)\pi = C'(i)$, interpolating in between (see Figure~\ref{fig:mappingvis} for a visualisation):
\[ \pi(x) = \begin{cases}
              x - C(1) + C'(1)                                     &\text{if } x < C(1) \\
              (x-C(i))\tfrac{C'(i+1)-C'(i)}{C(i+1)-C(i)} + C'(i)   &\text{if } C(i) \leq x < C(i+1) \\
              x-C(n) + C'(n)                                       &\text{if } C(n) \leq x
\end{cases} \]
Note that since $\frac{C'(i)-C'(i-1)}{C(i)-C(i-1)} > 0$ for any $1 < i \le n$, $\pi$ is monotone. Furthermore, its inverse is given by swapping $C$ and $C'$.
\[ \pi^{-1}(x) = \begin{cases}
              x - C'(1) + C(1)                                     &\text{if } x < 'C(1) \\
              (x-C'(i))\tfrac{C(i+1)-C(i)}{C'(i+1)-C'(i)} + C(i)   &\text{if } C'(i) \leq x < C'(i+1) \\
              x-C'(n) + C(n)                                       &\text{if } C'(n) \leq x
\end{cases} \]
Hence $\pi$ is a monotone bijection, and we conclude $\pi\in G$.
\end{proof}

\begin{lemma}\label{orbitrep}
Given a finite subset $C \subset \mathbb{Q}$, we have $[C,I]^{ec} \cong \mathcal{P}_{|C|}(\mathbb{Q})$.
\end{lemma}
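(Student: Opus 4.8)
The plan is to exhibit an explicit isomorphism $[C,I]^{ec} \to \mathcal{P}_{\#C}(\mathbb{Q})$ of $G$-sets. Write $n = \#C$. Recall that $[C,I]^{ec}$ is the set of right cosets $\{\ext_G(I)\,g \mid g \in G\}$, where $\ext_G(I) = \{\pi \in G \mid \pi|_C = \mathrm{id}_C\}$ is the subgroup of monotone bijections fixing $C$ pointwise. I would define the map $\varphi\colon [C,I]^{ec} \to \mathcal{P}_n(\mathbb{Q})$ by $\varphi(\ext_G(I)\,g) = Cg = \{cg \mid c \in C\}$. Since each $g$ is a bijection, $Cg$ has exactly $n$ elements, so the codomain is correct; and $\varphi$ is visibly equivariant for the direct-image action, since $\varphi(\ext_G(I)\,gh) = C(gh) = (Cg)h$.

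The three things to check are that $\varphi$ is well defined, injective, and surjective. For well-definedness and injectivity simultaneously, I would prove the key equivalence: for $g, h \in G$, $\ext_G(I)\,g = \ext_G(I)\,h$ if and only if $Cg = Ch$. The forward direction is immediate: if $g = \pi h$ with $\pi|_C = \mathrm{id}$, then $Cg = C\pi h = Ch$. For the converse, suppose $Cg = Ch$; then $gh^{-1}$ maps $C$ onto $C$, so $gh^{-1} \in \ext_G(G|_C)$, but by Lemma~\ref{group_trivial} we have $G|_C = I$, hence $(gh^{-1})|_C = \mathrm{id}_C$, i.e.\ $gh^{-1} \in \ext_G(I)$, so $\ext_G(I)\,g = \ext_G(I)\,h$. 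Surjectivity is exactly Lemma~\ref{lem:support_homogeneity} (Homogeneity): given any $C' \in \mathcal{P}_n(\mathbb{Q})$, since $\#C' = \#C = n$ there is $\pi \in G$ with $C\pi = C'$, so $\varphi(\ext_G(I)\,\pi) = C'$.

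Finally I would note that $\varphi$ is a bijective equivariant map, hence an isomorphism of nominal sets, giving $[C,I]^{ec} \cong \mathcal{P}_{\#C}(\mathbb{Q})$. I do not expect a serious obstacle here: the only nontrivial inputs are the two preceding lemmas, which are precisely tailored to supply the converse direction of the coset equivalence (via triviality of $G|_C$) and surjectivity (via homogeneity). The one point requiring a little care is making sure the action on cosets is the right one so that $\varphi$ comes out equivariant — i.e.\ using the right coset convention consistent with the definition of $[C,S]^{ec}$ in Section~\ref{subsec:nomorbit} — but this is bookkeeping rather than a real difficulty.
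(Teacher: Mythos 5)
Your proof is correct, but it takes a genuinely different route from the paper's. The paper deduces the lemma from the abstract representation theorem (Theorem~\ref{representation_tuple}): it uses Lemma~\ref{lem:support_homogeneity} to show $\mathcal{P}_{\#C}(\mathbb{Q})$ is a single orbit containing the element $C$ (whose least support is $C$ itself), so the theorem yields $\mathcal{P}_{\#C}(\mathbb{Q}) \cong [C,S]^{ec}$ for some $S \le G|_C$, and then Lemma~\ref{group_trivial} forces $S = I$. You instead bypass Theorem~\ref{representation_tuple} entirely and build the isomorphism $\ext_G(I)\,g \mapsto Cg$ by hand, using the same two lemmas but for different purposes: homogeneity gives you surjectivity (the paper uses it for single-orbitness), and triviality of $G|_C$ gives you injectivity via the coset equivalence $\ext_G(I)\,g = \ext_G(I)\,h \iff Cg = Ch$ (the paper uses it to identify $S$). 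Your version is more self-contained --- it does not rest on a theorem that the paper itself only sketches and outsources to~\cite{bojanczyk2014automata}, and it sidesteps the (unstated but needed) check that $C$ is the \emph{least} support of $C$ as an element of $\mathcal{P}_{\#C}(\mathbb{Q})$ --- at the cost of re-deriving, in this special case, the coset correspondence that the representation theorem packages once and for all. Both proofs are sound; the paper's is shorter given its imported machinery, yours is more elementary and makes the isomorphism explicit.
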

\begin{proof}
From Lemma~\ref{lem:support_homogeneity} 
it follows that $\mathcal{P}_{|C|}(\mathbb{Q})$ consists of a single orbit. Given this, in combination with the fact that $C\in\mathcal{P}_{|C|}(\mathbb{Q})$, Theorem~\ref{representation_tuple} gives a subgroup $S\le G|_C$ such that $\mathcal{P}_{|C|}(\mathbb{Q}) \cong [C,S]^{ec}$. Since $S\le G|_C$, Lemma~\ref{group_trivial} implies $S = I$. This proves that $[C,I]^{ec} \cong \mathcal{P}_{|C|}(\mathbb{Q})$.
\end{proof}

By Theorem~\ref{representation_tuple} and the above lemmas, we can represent an orbit by a single integer $n$, the size of the least support of its elements.
\begin{corollary}\label{cor:single-orbit}
	Let $X$ be an orbit-finite nominal set. Then
	$X \cong \mathcal{P}_{\dim(X)}(\mathbb{Q})$.
\end{corollary}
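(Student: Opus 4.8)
The plan is to assemble the ingredients already developed in this section; the statement is a corollary in the literal sense, so I do not expect any real obstacle. I read it as the single-orbit case --- consistent with Lemma~\ref{orbitrep} and the remark that $\dim(O)=\dim(x)$ for a single orbit $O$ --- since for a genuinely multi-orbit $X$ the right-hand side $\mathcal{P}_{\dim(X)}(\mathbb{Q})$ has only one orbit and the claimed isomorphism cannot hold.

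First I would note that the total order symmetry $(\mathbb{Q},G)$ admits least supports (recorded in Section~\ref{sec:nomsets}), so Theorem~\ref{representation_tuple} applies to $X$. Fix any element $x\in X$ and let $C\subseteq\mathbb{Q}$ be its least support; then $\#C=\dim(x)$, and because $X$ consists of a single orbit, $\dim(X)=\dim(x)=\#C$. Theorem~\ref{representation_tuple} then supplies a subgroup $S\le G|_C$ with $X\cong[C,S]^{ec}$.

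Next I would eliminate the finite ``local symmetry'' group: Lemma~\ref{group_trivial} gives $G|_C=I$ for every finite $C\subset\mathbb{Q}$, so $S=I$ and $X\cong[C,I]^{ec}$. Lemma~\ref{orbitrep} identifies $[C,I]^{ec}\cong\mathcal{P}_{\#C}(\mathbb{Q})$, and substituting $\#C=\dim(X)$ yields $X\cong\mathcal{P}_{\dim(X)}(\mathbb{Q})$, chaining the three isomorphisms.

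The only point worth a second look is well-posedness: the construction picks a representative $x$ together with its least support $C$, but any other element of the orbit has a least support of the same cardinality (all elements of an orbit share a dimension, and Lemma~\ref{lem:support_homogeneity} produces a monotone bijection between equinumerous finite subsets of $\mathbb{Q}$), so $\mathcal{P}_{\#C}(\mathbb{Q})$ does not depend on the choice and the statement is unambiguous. Everything substantive was already proved in Lemmas~\ref{group_trivial}--\ref{orbitrep}, so this last step is purely organisational.
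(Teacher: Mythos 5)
Your proof is correct and follows exactly the paper's own argument: apply Theorem~\ref{representation_tuple} to get $X\cong[C,S]^{ec}$, kill $S$ via Lemma~\ref{group_trivial}, identify $[C,I]^{ec}\cong\mathcal{P}_{\#C}(\mathbb{Q})$ via Lemma~\ref{orbitrep}, and note $\#C=\dim(X)$. Your observation that the statement must be read for single-orbit $X$ (despite the phrasing ``orbit-finite'') is also consistent with the paper, whose proof likewise invokes Theorem~\ref{representation_tuple} and which is only ever applied orbit-by-orbit in Proposition~\ref{thm:pres-nom}.
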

\begin{proof}
By Theorem~\ref{representation_tuple}, we get $C$ and $S \le G|_C$ such that $X \cong [C,S]^{ec}$. 
By Lemma~\ref{group_trivial}, $S = I$, and by Lemma~\ref{orbitrep}
we get 
\[ X \cong [C,I]^{ec} \cong \mathcal{P}_{|C|}(\mathbb{Q}) \,. \]
But $|C| = \dim(X)$, since $C$ is the least support of some element $x \in X$. 
\end{proof}
This naturally extends to (orbit-finite) nominal sets with multiple orbits
by using a multiset of natural numbers, representing the size of the least support of each of the orbits. 
These multisets are formalised here as functions $f \colon \mathbb{N} \rightarrow \mathbb{N}$. 

\begin{definition}\label{def:present-nom}
Given a function $f \colon \mathbb{N}\rightarrow\mathbb{N}$, we define a nominal set $\extendset{f}$ by
\begin{align*}
\extendset{f} = \bigcup_{\substack{n\in\mathbb{N}\\1\le i\le f(n)}} \{i\}\times \mathcal{P}_n(\mathbb{Q}).
\end{align*}
\end{definition}

\begin{proposition}\label{thm:pres-nom}
For every orbit-finite nominal set $X$, there is a unique function $f \colon \mathbb{N} \to \mathbb{N}$ such that $X \cong \extendset{f}$ and the set $\{n \mid f(n) \neq 0\}$ is finite.
\end{proposition}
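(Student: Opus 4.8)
The plan is to decompose the orbit-finite nominal set $X$ into its finitely many orbits and apply the single-orbit characterisation from Corollary~\ref{cor:single-orbit} to each. Concretely, since $X$ is orbit-finite, write $X = O_1 \sqcup \cdots \sqcup O_k$ as the disjoint union of its orbits (these partition $X$ by the general theory of $G$-sets). By Corollary~\ref{cor:single-orbit}, each $O_j$ is isomorphic to $\mathcal{P}_{d_j}(\mathbb{Q})$ where $d_j = \dim(O_j)$. Now define $f \colon \mathbb{N} \to \mathbb{N}$ by letting $f(n)$ be the number of indices $j$ with $d_j = n$; this is well-defined and $\{n \mid f(n) \neq 0\}$ is finite since there are only $k$ orbits. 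The disjoint union $[f]^o = \bigcup_{n, 1 \le i \le f(n)} \{i\} \times \mathcal{P}_n(\mathbb{Q})$ then has exactly $f(n)$ orbits of dimension $n$ for each $n$, so there is an evident bijection between the orbits of $X$ and those of $[f]^o$ matching dimensions; combining the orbit-wise isomorphisms $O_j \cong \mathcal{P}_{d_j}(\mathbb{Q})$ with the tagging by $\{i\}$ yields an equivariant bijection $X \cong [f]^o$ (the tags $\{i\}$ carry the trivial action and only serve to keep equal-dimension orbits disjoint, so they do not disturb equivariance).

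For existence, the one point deserving care is checking that assembling equivariant isomorphisms on each orbit gives a global equivariant isomorphism: this is routine because the group action on a disjoint union of orbits acts within each orbit, so a map defined orbit-by-orbit is equivariant iff its restriction to each orbit is, and bijectivity is likewise checked componentwise.

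For uniqueness, suppose $[f]^o \cong [g]^o$ via an equivariant isomorphism $\varphi$. An equivariant bijection sends orbits to orbits and preserves dimension (all elements of an orbit share a dimension, and $\dim$ is invariant under equivariant maps since $\varphi$ and $\varphi^{-1}$ both can only shrink or preserve least supports, forcing equality — this is essentially the standard fact that isomorphic single-orbit sets have equal dimension, already implicit in Corollary~\ref{cor:single-orbit}). Hence $\varphi$ restricts to a bijection between the dimension-$n$ orbits of $[f]^o$ and those of $[g]^o$ for each $n$, so $f(n) = g(n)$ for all $n$. Applying this with $g$ the function produced for $X$ in the existence part shows $f$ is uniquely determined by $X$.

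The main obstacle is not any single hard step but rather being careful about what "unique" means and making the counting argument airtight: one must observe that the number of orbits of a given dimension is an isomorphism invariant of an orbit-finite nominal set, which reduces to the facts that (i) equivariant isomorphisms map orbits bijectively to orbits, and (ii) dimension is preserved under isomorphism of single-orbit sets. Both follow quickly from the machinery already set up (least supports exist, $\dim$ is constant on orbits, and $[C,I]^{ec} \cong \mathcal{P}_{\#C}(\mathbb{Q})$ pins down the orbit up to its dimension), so the proof is short once this invariant is identified.
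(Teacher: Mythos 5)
Your proposal is correct and follows essentially the same route as the paper: decompose $X$ into its finitely many orbits, apply Corollary~\ref{cor:single-orbit} orbit-wise to get $f$ from the dimensions, and prove uniqueness by observing that an equivariant isomorphism preserves dimension (via Lemma~\ref{eqimap_contained} applied to the map and its inverse) and hence the count of orbits in each dimension. The paper merely grades by dimension before splitting into orbits rather than after, which is the same decomposition; your remark on assembling orbit-wise isomorphisms into a global one is a point the paper glosses over, and it is handled correctly.
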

\begin{proof}
We start by proving the existence. For this, grade $X$ by the dimension of its elements, defining $X_i = \{x\in X \mid \dim(x) = i\}$. Now split each $X_i$ up into its $k_i$ orbits $O_{i,j}$, such that 
\[ X_i = \bigcup\limits_{1\le j \le k_i} O_{i,j} \, . \]
By Corollary~\ref{cor:single-orbit}, we have $O_{i,j}\cong \{j\}\times \mathcal{P}_i(\mathbb{Q})$
for each orbit $O_{i,j}$. 

Define $f\colon\mathbb{N}\rightarrow\mathbb{N}$ such that $f(i) = k_i$. Then 
$\{n \mid f(n) \neq 0\}$ is finite, since $X$ is orbit-finite.
Writing this out gives 
\begin{align*}
\extendset{f} = \bigcup_{\substack{i\in\mathbb{N}\\1\le j\le f(i)}} \{j\}\times \mathcal{P}_i(\mathbb{Q})
\cong \bigcup_{\substack{i\in\mathbb{N}\\1\le j\le k_i}} O_{i,j}
= X.
\end{align*}

Next, we need to show that $f$ is unique. Suppose $g\colon\mathbb{N}\rightarrow\mathbb{N}$ also represents $X$, e.g. $X\cong\extendset{g}$. Then it follows that $\extendset{f}\cong\extendset{g}$. Let $h\colon \extendset{f}\rightarrow \extendset{g}$ be the isomorphism. Grade $\extendset{f}$ and $\extendset{g}$, letting $\extendset{f}_i = \{x\in \extendset{f} \mid \dim(x) = i\}$, and similarly for $\extendset{g}_i$. Since $h$ is an isomorphism, we have for any $x\in \extendset{f}$ that $\dim(h(x)) = \dim(x)$, implying $h(\extendset{f}_i) = \extendset{g}_i$. Furthermore, the fact that $h$ is an isomorphism gives $\Nsize(h(\extendset{f}_i)) = \Nsize(\extendset{f}_i)$. Using $\Nsize(\extendset{f}_i) = f(i)$, we find that $f(i) = \Nsize(\extendset{f}_i) = \Nsize(h(\extendset{f}_i)) = \Nsize(\extendset{g}_i) = g(i)$. Hence $f = g$, proving that $f$ is unique.
\end{proof}

\begin{example}\label{ex:qtimesq}
Consider the set $\mathbb{Q}\times\mathbb{Q}$. The elements $(a,b)$ split in three orbits, one for $a<b$, one for $a=b$ and one for $a>b$. These have dimension 2, 1 and 2 respectively, so the set $\mathbb{Q}\times\mathbb{Q}$ is represented by the multiset $\{1,2,2\}$.
\end{example}

\begin{remark}
The representation in terms of a function $f \colon \mathbb{N}\rightarrow\mathbb{N}$
enforces that there are only finitely many orbits of any given dimension.
The first part of the above proposition 
generalises to arbitrary nominal sets by replacing the codomain
of $f$ by the class of all sets and adapting Definition~\ref{def:present-nom} accordingly. 
However, the choice of function $f$ is no longer unique in that case---this is avoided in the finite
case by using natural numbers instead of finite sets.
\end{remark}

\subsection{Equivariant maps}
\label{sec:eq-maps}

We show how to represent equivariant maps, using two basic properties.
Let $f \colon X \to Y$ be an equivariant map.
The first property is that the direct image of an orbit (in $X$) is again an orbit (in $Y$),
that is to say, $f$ is defined `orbit-wise'.
Second, equivariant maps cannot introduce new elements in the support (but they can drop them).
More precisely:
\begin{lemma}\label{lem:eqiorbit}
Let $f \colon X \to Y$ be an equivariant map, and $O\subseteq X$ a single orbit.
The direct image $f(O) = \{f(x) \mid x \in O\}$ is a single-orbit nominal set.
\end{lemma}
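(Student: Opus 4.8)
The plan is to use the defining property of equivariant maps together with the fact that a single orbit is exactly the orbit $\{xg \mid g \in G\}$ of any one of its elements. First I would fix an element $x \in O$, so that $O = \{xg \mid g \in G\}$ since $O$ is a single orbit. Then I would compute the direct image: $f(O) = \{f(xg) \mid g \in G\}$. Using equivariance, $f(xg) = f(x)g$, so $f(O) = \{f(x)g \mid g \in G\}$, which is by definition the orbit of $f(x)$ in $Y$. Hence $f(O)$ is a single orbit, and in particular it is equivariant as a subset of $Y$ (closed under the $G$-action), so it inherits the nominal structure from $Y$ and is itself a single-orbit nominal set.

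The remaining thing to check is that $f(O)$, viewed as a $G$-set in its own right, is genuinely nominal, i.e.\ every element has a finite support. This is immediate: $f(O) \subseteq Y$ and $Y$ is nominal, so every element of $f(O)$ has a finite support in $Y$; since the action on $f(O)$ is the restriction of the action on $Y$, the same finite set supports it in $f(O)$. (One may also remark that $\dim(f(x)) \le \dim(x)$, reflecting the informal statement that equivariant maps can only drop elements of the support, but this is not needed for the statement itself.)

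I do not expect any real obstacle here — the lemma is essentially unwinding the definitions of "equivariant", "orbit", and "single-orbit nominal set". The only point requiring a moment's care is making sure that $f(O)$ is closed under the group action so that the phrase "single-orbit nominal set" even makes sense; this is exactly what the computation $f(O) = \{f(x)g \mid g \in G\}$ delivers. So the proof is short: pick a representative, push the action through $f$ using equivariance, and observe that the image is the orbit of the image of the representative.
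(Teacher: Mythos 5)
Your proof is correct and follows essentially the same route as the paper: both arguments reduce to pushing the group action through $f$ via the identity $f(x)g = f(xg)$, yours by computing $f(O) = \{f(x)g \mid g \in G\}$ from a fixed representative and the paper's by connecting two arbitrary elements of $f(O)$ with a single $\pi \in G$. The extra check that $f(O)$ is nominal is fine (and immediate, as you note), though the paper leaves it implicit.
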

\begin{proof}
Let $y$ and $y'$ both be elements of $f(O)$. To show that $f(O)$ is single-orbit, we need to construct a $\pi \in G$ such that $y\pi = y'$. By definition of $f(O)$, there exist $x\in O$, $x'\in O$ such that $f(x) = y$ and $f(x') = y'$. Since $O$ is single-orbit, there exists a $\pi \in G$ such that $x\pi = x'$. As $f$ is an equivariant function, we find $y\pi = f(x)\pi = f(x\pi) = f(x') = y'$. This proves that $f(O)$ is single-orbit.
\end{proof}

\begin{lemma}\label{eqimap_contained}
Let $f \colon X \to Y$ be an equivariant map between two nominal sets $X$ and $Y$. Let $x\in X$ and let $C$ be a support of $x$.
Then $C$ supports $f(x)$.
\end{lemma}
\begin{proof}
Let $\pi\in G$ be such that $\forall c \in C, c\pi = c$. Then since $C$ is the support of $x$, $x\pi = x$. But then also $f(x)\pi = f(x\pi) = f(x) = f(x)$. Hence $C$ is a support of $f(x)$. Then, by definition, the least support of $f(x)$ is contained in $C$.
\end{proof}

Hence, equivariant maps are fully determined by associating two pieces of information for each orbit in the domain:
the orbit on which it is mapped and a string denoting which elements of the least support of the input are preserved.
These ingredients are formalised in the first part of the following definition. 
The second part describes how these ingredients
define an equivariant function.
Proposition~\ref{eqimap_rep} below
then states that every equivariant function can be
described in this way. 

\begin{definition}\label{def:eqmap}
Let $H = \{(I_1, F_1, O_1), \ldots, (I_n, F_n, O_n)\}$ be a finite set of tuples where the $I_i$'s are disjoint single-orbit nominal sets, the $O_i$'s are single-orbit nominal sets with $\dim(O_i) \leq \dim(I_i)$ and the $F_i$'s are bit strings of length $\dim(I_i)$ with exactly $\dim(O_i)$ ones. 

Given a set $H$ as above, we define
$f_H \colon \bigcup I_i \to \bigcup O_i$ as the unique equivariant function such that, given $x \in I_i$ with least support $C$, $f_H(x)$ is the unique element of $O_i$ with support $\{C(j) \mid F_i(j) = 1\}$, where $F_i(j)$ is the $j$-th bit of $F_i$ and $C(j)$ is the $j$-th smallest element of $C$.
\end{definition}

\begin{proposition}\label{eqimap_rep}
For every equivariant map $f \colon X\rightarrow Y$ between orbit-finite nominal sets $X$ and $Y$ there exists a unique set $H$ 
as in Definition~\ref{def:eqmap} such that $f = f_H$.
\end{proposition}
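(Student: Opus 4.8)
The plan is to prove the two halves---existence of a representing set $H$ and its uniqueness---separately, in both cases reducing to facts already established about orbits and equivariant maps.

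For \emph{existence}, I would decompose $X$ into its finitely many orbits $I_1, \ldots, I_n$ (using orbit-finiteness of $X$). For each $I_i$, Lemma~\ref{lem:eqiorbit} tells us that $O_i := f(I_i)$ is a single-orbit subset of $Y$, and since $Y$ is orbit-finite this is one of its orbits; by Lemma~\ref{eqimap_contained} we get $\dim(O_i) \leq \dim(I_i)$, since the least support of $f(x)$ is contained in that of $x$ for $x \in I_i$. It remains to produce the bit string $F_i$. Fix a representative $x \in I_i$ with least support $C$, $\#C = \dim(I_i)$. By Lemma~\ref{eqimap_contained}, the least support $C'$ of $f(x)$ satisfies $C' \subseteq C$, so I define $F_i \in \{0,1\}^{\dim(I_i)}$ to have a $1$ in position $j$ exactly when $C(j) \in C'$; this string has exactly $\#C' = \dim(O_i)$ ones, as required. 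The main thing to check here is \emph{well-definedness}: the triple $(I_i, F_i, O_i)$ must not depend on the choice of representative $x \in I_i$. This follows because any other representative is $xg$ for some $g \in G$, its least support is $Cg$, the least support of $f(xg) = f(x)g$ is $C'g$, and since monotone bijections preserve order, the $j$-th smallest element of $Cg$ lies in $C'g$ iff $C(j) \in C'$ --- so $F_i$ is unchanged. Finally one checks $f = f_H$: on $x \in I_i$, Definition~\ref{def:eqmap} specifies $f_H(x)$ as the unique element of $O_i$ supported by $\{C(j) \mid F_i(j) = 1\} = C'$; since $O_i = [C', I]^{ec}$ is a single orbit (Lemma~\ref{orbitrep}) and by Lemma~\ref{group_trivial} $G|_{C'}$ is trivial, there is indeed exactly one element of $O_i$ with least support $C'$, and that element is $f(x)$. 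So $f_H$ is well-defined and equal to $f$.

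For \emph{uniqueness}, suppose $f = f_H = f_{H'}$ with $H = \{(I_i, F_i, O_i)\}$ and $H' = \{(I'_k, F'_k, O'_k)\}$. The domains satisfy $\bigcup I_i = X = \bigcup I'_k$, and since the $I_i$ (resp.\ $I'_k$) are disjoint single orbits, both families are exactly the set of orbits of $X$; hence after reindexing $I_i = I'_i$. Given an orbit $I_i$ with representative $x$ of least support $C$: applying $f_H$ and $f_{H'}$ to $x$ must give the same element $f(x) \in Y$. Its least support determines $O_i = O'_i$ as the orbit of $f(x)$, and determines the set $\{C(j) \mid F_i(j) = 1\}$ as the least support of $f(x)$, which equals $\{C(j) \mid F'_i(j) = 1\}$; since the $C(j)$ are distinct, this forces $F_i = F'_i$. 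Hence $H = H'$.

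The routine but slightly delicate point throughout is pinning down \emph{which} single-orbit set a given single orbit ``is'' --- i.e.\ that an orbit is determined by the dimension of its elements (Corollary~\ref{cor:single-orbit}), so that ``$O_i$ is the orbit of $f(x)$'' is unambiguous --- and relatedly that within a single orbit, an element is uniquely determined by its least support (again via Lemma~\ref{group_trivial}, which makes the cosets of $\ext_G(I)$ correspond bijectively to size-$n$ subsets of $\Q$). I expect the main obstacle, such as it is, to be the careful verification that $F_i$ is independent of the representative and that $f_H$ as defined in Definition~\ref{def:eqmap} is genuinely well-defined and equivariant; once one has internalized that monotone bijections act on finite subsets of $\Q$ by simply permuting them in an order-preserving way, this becomes a short order-theoretic argument rather than anything deep.
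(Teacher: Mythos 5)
Your proposal is correct and follows essentially the same route as the paper: decompose $X$ into orbits, use Lemma~\ref{lem:eqiorbit} and Lemma~\ref{eqimap_contained} to obtain $O_i$ and the bit string $F_i$ from a representative, and derive uniqueness from the fact that the least support of $f(x)$ pins down both $O_i$ and $F_i$. Your explicit check that $F_i$ is independent of the chosen representative is a welcome addition that the paper handles implicitly by invoking equivariance of $f$ and $f_H$ to extend agreement at one point to the whole orbit.
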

\begin{proof}
We start with showing existence by construction. Split $X$ into its constituent orbits, call them $I_1$ through $I_n$. For each of these, select an element $e_i \in I_i$. Let $O_i$ be the orbit of $f(e_i)$. By Lemma~\ref{lem:eqiorbit}, $f(I_i) = O_i$. For each $e_i$, let $C_i$ be the least support of $e_i$ and $C'_i$ the least support of $f(e_i)$. Let $F_i$ be the string with $F_i(j) = 1$ if $C_i(j) \in C'_i$, and $F_i(j)=0$ otherwise. Let $H = \{(I_i, F_i, O_i) \mid i \in \{1,\ldots, n\}\}$. By construction, $f_H(e_i)$ is the unique element of $O_i$ with support $C'_i \cap C_i$. By Lemma~\ref{eqimap_contained}, $C'_i\cap C_i = C'_i$, implying $f_H(e_i) = f(e_i)$. Since both are equivariant functions with the same domain, we have $f(x) = f_H(x)$ for all $x \in X$. Hence $f = f_H$.

To show that $H$ is unique, consider an $H'$ such that $f=f_{H'}$. As a consequence, we have $f_H = f_{H'}$. From the definition of orbits it follows immediately that the split of $X$ into $I_i$ is unique up to the choice of indices, so that we can label the tuples $(I_i', F_i', O_i')$ in $H'$ such that $I_i' = I_i$. It then follows that $O_i = f_H(I_i) = f_{H'}(I'_i) = O'_i$. To show $F_i = F_i'$, consider an $x\in I_i$. Let $C$ denote the least support of $x$, and $C_f$ the least support of $f(x)$. By definition of $f_H$ and $f_{H'}$, it follows that $\{C(j) \mid F_i(j) = 1\} = C_x = \{C(j) \mid F_i'(j)$. But this is only possible if $F_i = F_i'$, and hence $H=H'$.
\end{proof}

\begin{example}
Consider the function $\min\colon\mathcal{P}_3(\mathbb{Q})\rightarrow\mathbb{Q}$ which returns the smallest element of a 3-element set. Note that both $\mathcal{P}_3(\mathbb{Q})$ and $\mathbb{Q}$ are single orbits. Since for the orbit $\mathcal{P}_3(\mathbb{Q})$ we only keep the smallest element of the support, we can thus represent the function $\min$ with $\{(\mathcal{P}_3(\mathbb{Q}), 100, \mathbb{Q})\}$.
\end{example}
\begin{example}
Consider the (right) projection $\pi_2 \colon \mathbb{Q} \times \mathbb{Q} \rightarrow \mathbb{Q}$.
Recall from Example~\ref{ex:qtimesq} that the set $\mathbb{Q} \times \mathbb{Q}$ has three orbits 
$Q_1 = \{(a,b) \mid a < b\}$, $Q_2 = \{(a,b) \mid a=b\}$ and $Q_3 = \{(a,b) \mid a>b\}$.
The function $\pi_2$ is represented by $\{(Q_1, 01, \mathbb{Q}),(Q_2, 1, \mathbb{Q}), (Q_3, 10, \mathbb{Q})$. 
\end{example}

\subsection{Products}\label{subsec:products}

The product $X \times Y$ of two nominal sets is again a nominal set
and hence it can be represented itself in terms of the dimension of each of its orbits as shown in Section~\ref{sec:orbits-and-nominal-set}. However, this approach has some disadvantages.

\begin{example}\label{ex:product}
We start by showing that the orbit structure of products can be non-trivial.
Consider the product of $X=\mathbb{Q}$ and the set ${Y=\{(a,b) \in \Q^2 \mid a < b\}}$.
This product consists of five orbits, more than one might naively expect from the fact that both sets are single-orbit:
\[
\begin{array}{ll}
\{(a,(b,c)) \mid a,b,c\in\mathbb{Q}, a < b < c\}, \quad & \{(a,(a,b)) \mid a,b\in\mathbb{Q}, a < b\}, \\
\{(b,(a,c)) \mid a,b,c\in\mathbb{Q}, a < b < c\}, \quad & \{(b,(a,b)) \mid a,b\in\mathbb{Q}, a < b\}, \\
\{(c,(a,b)) \mid a,b,c\in\mathbb{Q}, a < b < c\}.
\end{array}
\]
\end{example}

We find that this product is represented by the multiset $\{2,2,3,3,3\}$. Unfortunately, this is not sufficient to accurately describe the product as it abstracts
away from the relation between its elements with those in $X$ and $Y$.
In particular, it is not possible to reconstruct the projection maps from such a representation. 

The essence of our representation of products
is that each orbit $O$ in the product $X \times Y$ is described entirely by the dimension of $O$ together with 
the two (equivariant) projections
$\pi_1 \colon O \to X$ and $\pi_2 \colon O \to Y$.
This combination of the orbit and the two projection maps can already be represented using Propositions~\ref{thm:pres-nom} and \ref{eqimap_rep}. However, as we will see, a combined representation for this has several advantages. For discussing such a representation, let us first introduce what it means for tuples of a set and two functions to be isomorphic:

\begin{definition}
Given nominal sets $X, Y, Z_1$ and $Z_2$, and equivariant functions $l_1 \colon Z_1 \to X$, $r_1 \colon Z_1 \to Y$, $l_2 \colon Z_2 \to X$ and $r_2 \colon Z_2 \to Y$, we define $(Z_1, l_1, r_1) \cong (Z_2, l_2, r_2)$ if there exists an isomorphism $h \colon Z_1 \to Z_2$ such that $l_1 = l_2 \circ h$ and $r_1 = r_2 \circ h$.
\end{definition}

Our goal is to have a representation that, for each orbit $O$, produces a tuple $(A, f_1, f_2)$ isomorphic to the tuple $(O, \pi_1, \pi_2)$. The next lemma gives a characterisation that can be used to simplify such a representation.

\begin{lemma}\label{lm:supp-product}
Let $X$ and $Y$ be nominal sets and $(x, y) \in X \times Y$.
If $C$, $C_x$ and $C_y$ are the least supports of $(x, y)$, $x$ and $y$ respectively, then $C = C_x \cup C_y$.
\end{lemma}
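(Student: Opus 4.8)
The plan is to prove the two inclusions $C \subseteq C_x \cup C_y$ and $C_x \cup C_y \subseteq C$ separately, using only the definition of support together with the fact (stated before Theorem~\ref{representation_tuple}) that the total order symmetry admits least supports, so all the least supports mentioned genuinely exist and are unique.

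For the inclusion $C \subseteq C_x \cup C_y$, I would first show that $C_x \cup C_y$ is \emph{a} support of $(x,y)$, and then appeal to minimality of $C$. Concretely, take any $\pi \in G$ fixing every element of $C_x \cup C_y$ pointwise. Then $\pi$ fixes $C_x$ pointwise, so $x\pi = x$ since $C_x$ supports $x$; likewise $y\pi = y$. By the pointwise action on the product, $(x,y)\pi = (x\pi, y\pi) = (x,y)$. Hence $C_x \cup C_y$ supports $(x,y)$, and since $C$ is the \emph{least} support of $(x,y)$, we get $C \subseteq C_x \cup C_y$.

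For the reverse inclusion $C_x \cup C_y \subseteq C$, it suffices to show that $C$ supports both $x$ and $y$, since then minimality of $C_x$ and $C_y$ gives $C_x \subseteq C$ and $C_y \subseteq C$. This is essentially Lemma~\ref{eqimap_contained} applied to the two projection maps $\pi_1 \colon X \times Y \to X$ and $\pi_2 \colon X \times Y \to Y$, which are equivariant: since $C$ supports $(x,y)$, it supports $\pi_1(x,y) = x$ and $\pi_2(x,y) = y$. Alternatively one can argue directly: if $\pi \in G$ fixes $C$ pointwise, then $(x,y)\pi = (x,y)$, hence $x\pi = x$ and $y\pi = y$ by looking at components, so $C$ supports $x$ and $C$ supports $y$. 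Combining the two inclusions yields $C = C_x \cup C_y$.

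I do not expect a serious obstacle here; the statement is a direct consequence of the definition of support and the pointwise group action on products, and the only subtlety is making sure to invoke the existence/uniqueness of least supports (guaranteed for the total order symmetry) so that the objects $C$, $C_x$, $C_y$ are well defined before reasoning about their minimality. If anything, the mild care needed is to note that the two inclusions use minimality of \emph{different} least supports: the forward inclusion uses that $C$ is minimal, while the backward inclusion uses that $C_x$ and $C_y$ are minimal.
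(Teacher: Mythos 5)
Your proof is correct, and the first inclusion ($C \subseteq C_x \cup C_y$ via showing $C_x \cup C_y$ supports $(x,y)$) is exactly the paper's argument. For the reverse inclusion, however, you take a genuinely different and more direct route. The paper argues by contradiction: assuming some $c \in C_x \cup C_y$ lies outside $C$ (say $c \in C_x$), it observes that $(C_x \cup C_y)\setminus\{c\}$ cannot support $x$ because it fails to contain the least support $C_x$, extracts a permutation $\pi$ fixing $(C_x \cup C_y)\setminus\{c\}$ pointwise with $x\pi \neq x$, and derives a contradiction with the fact that this smaller set still contains $C$ and hence supports $(x,y)$. You instead note that $C$, being a support of $(x,y)$, supports each component $x$ and $y$ (either by Lemma~\ref{eqimap_contained} applied to the equivariant projections, or directly from $(x,y)\pi = (x\pi, y\pi)$), so minimality of $C_x$ and $C_y$ immediately gives $C_x \subseteq C$ and $C_y \subseteq C$. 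Your version is shorter, avoids constructing a witness permutation, and makes the symmetry of the two inclusions transparent (each uses minimality of a different least support, as you note); the paper's version has the minor advantage of not needing the projections or the observation that supports pass to components, but this is a negligible saving. Both arguments are sound.
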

\begin{proof}
Let $\pi\in G$ be a group element such that $\forall c \in C_x \cup C_y, c\pi = c$.
Then
$(x,y)\pi = (x \pi, y \pi) = (x,y)$
since $C_x$ and $C_y$ are supports of $x$ and $y$ respectively. 
Hence $C_x \cup C_y$ is a support of $x$, and since $C$ is the least support of $x$, $C\subseteq C_x\cup C_y$.

Now suppose that $C$ is strictly smaller than $C_x\cup C_y$. Then there is an element $c \in C_x\cup C_y$ with $c \notin C$. Without loss of generality we can assume $c \in C_x$. 
The set  $(C_x\cup C_y)\setminus\{c\}$ is \emph{not} a support of $x$, 
since the least support $C_x$ of $x$ is not contained in $(C_x\cup C_y)\setminus\{c\}$.
Hence, there is some $\pi \in G$ such that $\forall c' \in (C_x\cup C_y)\setminus\{c\}, c' \pi = c'$, but $x\pi \ne x$. 
For this $\pi$, we have $(x, y)\pi = (x\pi, y\pi) \ne (x, y)$. 
However, $(C_x\cup C_y)\setminus\{c\}$ is a support of $(x,y)$, since $C \subseteq (C_x\cup C_y)\setminus\{c\}$. Hence
$(x,y) \pi = (x,y)$, yielding a contradiction.
\end{proof}

With Proposition~\ref{eqimap_rep} we represent the maps $\pi_1$ and $\pi_2$ 
by tuples $(O, F_1, O_1)$ and $(O, F_2, O_2)$ respectively.
Using Lemma~\ref{lm:supp-product} and the definitions of $F_1$ and $F_2$, we see that at least one of $F_1(i)$ and $F_2(i)$ equals $1$ for each $i$.

We can thus combine the strings $F_1$ and $F_2$ into a single string $P \in \{L, R, B\}^{*}$ as follows.
We set $P(i) = L$ when only $F_1(i)$ is $1$, $P(i) = R$ when only $F_2(i)$ is $1$, and $P(i) = B$ when both are 
$1$.
The string $P$ fully describes the strings $F_1$ and $F_2$. This process for constructing the string $P$ gives it two useful properties:
\begin{itemize}
	\item The number of $L$s and $B$s in the string $P$ equals the dimension of $O_1$. 
	\item The number of $R$s and $B$s in the string $P$ equals the dimension of $O_2$. 
\end{itemize}
We will call strings $P$ with the above two properties \emph{valid} (with respect to $O_1,O_2$).

Thus, to describe a single orbit of the product $X \times Y$, a valid string $P$ together with the images of $\pi_1$ and $\pi_2$ is sufficient. This is stated more precisely in Proposition~\ref{orbstring}.

\begin{definition}\label{def:orbstringdef}
Let $O_1\subseteq X$, $O_2\subseteq Y$ be single-orbit sets, and 
let $P \in\{L,R,B\}^{*}$ be a valid string with respect to $O_1,O_2$. Define
\begin{align*}
\extendproduct{(P,O_1,O_2)} = (\mathcal{P}_{|P|}(\mathbb{Q}), f_{H_1}, f_{H_2}),
\end{align*}
where $H_i = \{(\mathcal{P}_{|P|}(\mathbb{Q}), F_i, O_i)\}$
and the string $F_1$ is defined as the string $P$ with $L$s and $B$s replaced by $1$s and $R$s by $0$s.
The string $F_2$ is similarly defined with the roles of $L$ and $R$ swapped.
\end{definition}
This construction generates orbits of $X \times Y$:
\begin{lemma}
	Let $(P,O_1,O_2)$ be a tuple as in Definition~\ref{def:orbstringdef}.
	Then we have $\extendproduct{(P,O_1,O_2)} \cong (O, \pi_1, \pi_2)$ 
	for some orbit $O \subseteq X \times Y$. 
\end{lemma}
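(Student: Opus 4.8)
The plan is to produce, for a given tuple $(P, O_1, O_2)$, a concrete orbit $O \subseteq X \times Y$ together with the witnessing isomorphism $h \colon \mathcal{P}_{|P|}(\mathbb{Q}) \to O$, and then check that $h$ intertwines the two projections with the maps $f_{H_1}, f_{H_2}$ from Definition~\ref{def:orbstringdef}. First I would fix a reference set $C \subset \mathbb{Q}$ with $\#C = |P| =: k$, say $C = \{1, \dots, k\}$, and define $C_1 = \{C(j) \mid P(j) \in \{L, B\}\}$ and $C_2 = \{C(j) \mid P(j) \in \{R, B\}\}$. By validity of $P$ we have $\#C_1 = \dim(O_1)$ and $\#C_2 = \dim(O_2)$, and moreover $C_1 \cup C_2 = C$ because every position of $P$ is $L$, $R$, or $B$. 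Using Lemma~\ref{orbitrep} (or rather Corollary~\ref{cor:single-orbit} applied orbit-wise), pick the unique element $x_1 \in O_1$ with least support $C_1$ and the unique $x_2 \in O_2$ with least support $C_2$; set $x = (x_1, x_2) \in X \times Y$ and let $O$ be the orbit of $x$. By Lemma~\ref{lm:supp-product}, the least support of $x$ is exactly $C_1 \cup C_2 = C$, so $\dim(O) = k$ and hence, by Corollary~\ref{cor:single-orbit}, $O \cong \mathcal{P}_k(\mathbb{Q})$; concretely the isomorphism $h$ sends a $k$-element set $D = \{d_1 < \dots < d_k\}$ to $x\pi_D$ where $\pi_D \in G$ is the monotone bijection with $C(j)\pi_D = d_j$ furnished by Lemma~\ref{lem:support_homogeneity}. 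One must check $h$ is well-defined (independent of the choice of $\pi_D$, which follows from $\dim(x) = k$ so that any two such $\pi_D$ differ by a stabiliser of $x$) and equivariant; this is exactly the content already used in the proof of Lemma~\ref{orbitrep}.

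Next I would verify the intertwining conditions $\pi_1 \circ h = f_{H_1}$ and $\pi_2 \circ h = f_{H_2}$. Take $D = \{d_1 < \dots < d_k\} \in \mathcal{P}_k(\mathbb{Q})$ and compute $\pi_1(h(D)) = \pi_1(x\pi_D) = x_1 \pi_D$. By construction $x_1$ has least support $C_1 = \{C(j) \mid P(j) \in \{L,B\}\}$, so $x_1\pi_D$ has least support $\{d_j \mid P(j) \in \{L,B\}\} = \{D(j) \mid F_1(j) = 1\}$, where $F_1$ is the bit string from Definition~\ref{def:orbstringdef} (the one with $L, B \mapsto 1$ and $R \mapsto 0$). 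Since $x_1\pi_D \in O_1$ and an element of a single orbit is uniquely determined by its least support (again Corollary~\ref{cor:single-orbit}), $x_1\pi_D$ is precisely the value prescribed by $f_{H_1}$ on $D$, i.e. $f_{H_1}(D)$. The argument for $\pi_2$ and $f_{H_2}$ is symmetric, swapping the roles of $L$ and $R$. This establishes $[(P,O_1,O_2)]^t \cong (O, \pi_1, \pi_2)$.

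The main obstacle — and the place where the argument needs the most care — is confirming that the bit strings $F_1, F_2$ arising from the \emph{given} $P$ are genuinely admissible in the sense of Definition~\ref{def:eqmap}, i.e. that $F_i$ has length $\dim(O) = |P|$ and exactly $\dim(O_i)$ ones, so that $f_{H_1}$ and $f_{H_2}$ are actually well-defined equivariant maps and not just formal symbols; this is precisely where \emph{validity} of $P$ is used, together with $\dim(O) = |P|$ established above via Lemma~\ref{lm:supp-product}. A secondary point to be careful about is that the element $x_1 \in O_1$ with a prescribed least support $C_1$ exists and is unique: existence and uniqueness both follow from $O_1 \cong \mathcal{P}_{\dim(O_1)}(\mathbb{Q})$ via Corollary~\ref{cor:single-orbit} once $\#C_1 = \dim(O_1)$, which is again guaranteed by validity. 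Everything else is the kind of routine bookkeeping with monotone bijections already carried out in Lemmas~\ref{group_trivial}, \ref{lem:support_homogeneity}, and~\ref{orbitrep}, so I would not belabour it.
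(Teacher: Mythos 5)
Your proof is correct, but it distributes the work differently from the paper. The paper takes the pairing $\langle f_{H_1}, f_{H_2}\rangle \colon \mathcal{P}_{|P|}(\mathbb{Q}) \to X\times Y$ of the two maps from Definition~\ref{def:orbstringdef}, notes that its image is a single orbit by Lemma~\ref{lem:eqiorbit}, and then proves this pairing is a bijection by the support argument: every position of $P$ is $L$, $R$ or $B$, so by Lemma~\ref{lm:supp-product} the least support of $(f_{H_1}(x), f_{H_2}(x))$ equals that of $x$, hence domain and image have the same dimension. With that choice of isomorphism the compatibility with $\pi_1,\pi_2$ is automatic, since $\pi_i\circ\langle f,g\rangle$ is $f$ or $g$ by definition. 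You instead manufacture a representative element $(x_1,x_2)$ with prescribed least supports $C_1, C_2$, take its orbit, and build the isomorphism $h$ by hand via Lemma~\ref{lem:support_homogeneity}; for you bijectivity is essentially free (the inverse is explicit, $D\mapsto x\pi_D$ with $C\pi_D = D$), but the intertwining $\pi_i\circ h = f_{H_i}$ then requires the least-support computation you carry out, resting on the uniqueness of an element of a single orbit with a given least support. Both arguments pivot on the same two facts --- Lemma~\ref{lm:supp-product} and the exhaustiveness of $\{L,R,B\}$ --- so the difference is one of packaging: the paper's pairing makes the projection conditions trivial at the cost of an injectivity argument, while your explicit construction trades that for a direct verification of the projections. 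Your version has the minor advantage of exhibiting a concrete representative of the orbit $O$, which is closer to how the implementation actually enumerates product orbits.
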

\begin{proof}
Let $(O', f, g) = \extendproduct{(P, O_1, O_2)}$. By construction, we find $f(O') \subseteq X$ and $g(O') \subseteq Y$. 
Denote by $\langle f, g \rangle \colon O' \rightarrow X \times Y$ the pairing,
i.e., $\langle f, g \rangle(x) = (f(x),g(x))$.
By Lemma~\ref{lem:eqiorbit}, since $O'$ is single-orbit, so is $\langle f, g \rangle(O')$. The latter
is an orbit of $X \times Y$. 

We now show that $\langle f, g \rangle$ is an isomorphism. 
First, by construction of $f$ and $g$, we find that if $C$ is the least support of $x\in O'$, then $C$ is also the least support of $(f(x), g(x))$, since every element in the support of $x$ is in at least one of the least supports of $f(x)$ and $g(x)$, and by Lemma~\ref{lm:supp-product}, the least support of $(f(x), g(x))$ is the union of the least supports of $f(x)$ and $g(x)$. This implies that the elements of both $O'$ and $\langle f, g \rangle(O')$ have the same support size. Since both $O'$ and $\langle f, g \rangle(O')$ are single-orbit, this makes $\langle f, g \rangle$ a bijection. Hence,
\[
\extendproduct{(P, O_1, O_2)} = (O', f, g) \cong (\langle f, g \rangle(O'), \pi_1|_{\langle f, g \rangle(O')}, \pi_2|_{\langle f, g \rangle(O')}) \,. \qedhere
\]
\end{proof}
The following result shows that every orbit of $X \times Y$ arises in this way, up to isomorphism.

\begin{proposition}\label{orbstring}
For every orbit $O \subseteq X \times Y$
there is a unique tuple 
$(P, O_1, O_2)$ such that $O_1\subseteq X$, $O_2\subseteq Y$ are orbits, $P$ is a valid string
and
\[ \extendproduct{(P,O_1,O_2)} \cong (O, \pi_1|_O, \pi_2|_O) \,. \]
\end{proposition}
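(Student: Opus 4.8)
The plan is to establish two things: existence of a tuple $(P, O_1, O_2)$ with $[(P,O_1,O_2)]^t \cong (O, \pi_1|_O, \pi_2|_O)$, and its uniqueness. For existence, I would start from the orbit $O \subseteq X \times Y$ and consider the two projection maps $\pi_1|_O \colon O \to X$ and $\pi_2|_O \colon O \to Y$. By Lemma~\ref{lem:eqiorbit}, their images are single orbits $O_1 \subseteq X$ and $O_2 \subseteq Y$. Applying Proposition~\ref{eqimap_rep} to each restricted projection (viewed as an equivariant map $O \to O_i$) yields tuples $(O, F_1, O_1)$ and $(O, F_2, O_2)$ describing them. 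Corollary~\ref{cor:single-orbit} identifies $O$ with $\mathcal{P}_{\dim(O)}(\mathbb{Q})$, so $F_1, F_2$ are bit strings of length $\dim(O)$. The key point is then to invoke Lemma~\ref{lm:supp-product}: for $(x,y) \in O$ with least support $C$, we have $C = C_x \cup C_y$ where $C_x, C_y$ are the least supports of $x$ and $y$; by the definition of the $F_i$ in Proposition~\ref{eqimap_rep}, $F_1(i) = 1$ iff $C(i) \in C_x$ and $F_2(i) = 1$ iff $C(i) \in C_y$, so at least one of $F_1(i), F_2(i)$ is $1$ for every $i$. Hence I can merge $F_1$ and $F_2$ into a string $P \in \{L,R,B\}^*$ as in the discussion preceding Definition~\ref{def:orbstringdef}, and $P$ is valid with respect to $O_1, O_2$ by the stated counting properties. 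Unwinding Definition~\ref{def:orbstringdef}, $[(P,O_1,O_2)]^t = (\mathcal{P}_{|P|}(\mathbb{Q}), f_{H_1}, f_{H_2})$ with $H_i = \{(\mathcal{P}_{|P|}(\mathbb{Q}), F_i, O_i)\}$, and the preceding lemma already shows this equals $(O'', f, g)$ with $\langle f, g\rangle$ an isomorphism onto some orbit of $X \times Y$; I would check this orbit is exactly $O$ (it contains $(x,y)$ once we track the construction, using that $f, g$ recover $\pi_1, \pi_2$) and that the isomorphism intertwines the projections, giving the claimed $\cong$.

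For uniqueness, suppose $(P, O_1, O_2)$ and $(P', O_1', O_2')$ both satisfy the conclusion, so that $[(P,O_1,O_2)]^t \cong (O, \pi_1|_O, \pi_2|_O) \cong [(P',O_1',O_2')]^t$. Composing the two isomorphisms gives an isomorphism $h \colon \mathcal{P}_{|P|}(\mathbb{Q}) \to \mathcal{P}_{|P'|}(\mathbb{Q})$ commuting with both pairs of maps, which forces $|P| = |P'|$ (equal dimension) and, since $h$ must commute with the first projections, forces $f_{H_1} = f_{H_1'} \circ h$; because $h$ is an isomorphism between single orbits of the same dimension it preserves least supports, and then the uniqueness clause of Proposition~\ref{eqimap_rep} (applied to each projection) forces $F_1 = F_1'$, $F_2 = F_2'$, $O_1 = O_1'$, $O_2 = O_2'$. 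Since $P$ is determined by the pair $(F_1, F_2)$ via the merging rule, $P = P'$.

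The main obstacle I anticipate is the bookkeeping in the existence part: one must be careful that the orbit produced by $\langle f, g\rangle$ in the preceding lemma is genuinely $O$ itself and not merely \emph{some} orbit of $X \times Y$ isomorphic to it, and that the isomorphism respects the projection maps on the nose (i.e., $\pi_1|_O \circ \langle f,g\rangle = f$ and $\pi_2|_O \circ \langle f,g\rangle = g$ as equivariant maps, not just up to isomorphism). This requires tracking a concrete representative $(x,y) \in O$ through the construction: picking $x$ with least support $C$ in $\mathcal{P}_{|P|}(\mathbb{Q})$, computing $f_{H_1}(x)$ and $f_{H_2}(x)$ via Definition~\ref{def:eqmap}, and verifying the resulting pair lies in $O$ with the support relation of Lemma~\ref{lm:supp-product} holding. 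Everything else — the counting properties making $P$ valid, the well-definedness of the merge, and the uniqueness argument — is routine given Propositions~\ref{thm:pres-nom} and~\ref{eqimap_rep} and Lemma~\ref{lm:supp-product}.
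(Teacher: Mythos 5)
Your proposal is correct and follows essentially the same route as the paper: extract $F_1,F_2$ from the projections via Proposition~\ref{eqimap_rep}, use Lemma~\ref{lm:supp-product} to merge them into a valid string $P$, and prove uniqueness by recovering $|P|$, the $O_i$, and the $F_i$ from the isomorphism of triples. The only cosmetic difference is that the paper establishes $(O,\pi_1|_O,\pi_2|_O)\cong[(P,O_1,O_2)]^{t}$ directly through the isomorphism $O\cong\mathcal{P}_{\dim(O)}(\mathbb{Q})$ and the factorisation $f_{\{(O,F_i,O_i)\}}=f_{\{(\mathcal{P}_{|F_i|}(\mathbb{Q}),F_i,O_i)\}}\circ g$, which sidesteps the bookkeeping issue you flag about identifying the image orbit of $\langle f,g\rangle$ with $O$ on the nose.
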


\begin{proof}
Let us start by constructing such a tuple $(P, O_1, O_2)$ for a given orbit $O\subseteq X\times Y$. Since $O$ is an orbit, Lemma~\ref{eqimap_rep} provides two tuples $(O, F_1, O_1)$ and $(O, F_2, O_2)$ with $O_1$ an orbit of $X$ and $O_2$ an orbit of $Y$ such that $\pi_1|_O = f_{\{(O,F_1, O_1)\}}$ and $\pi_2|_O = f_{\{(O,F_2,O_2)\}}$. 
Now construct $P$ as the sequence of length $\dim(O)$ as follows: 
\[
P(i) = 
\begin{cases}
L & \text{ if } F_1(i) = 1 \text{ and } F_2(i) = 0 \\
R & \text{ if } F_1(i) = 0 \text{ and } F_2(i) = 1 \\
B & \text{ if } F_1(i) = F_2(i) = 1
\end{cases}
\]
This covers all cases, since Lemma~\ref{lm:supp-product} guarantees that it will 
never be the case that the $i$-th letters of $F_1$ and $F_2$ are both $0$.

We first show that $P$ is valid. To see this, observe that by Definition~\ref{def:eqmap}, each $1$ in the string $F_1$ corresponds to a unique element in the least support of an element of $O_1$. Hence $\dim(O_1) = |F_1|$. By definition of $F_1$ we find $\dim(O_1) = |F_1| = |P|_L+|P|_B$. A similar line of reasoning, replacing $L$ with $R$, shows $\dim(O_2) = |F_2| = |P|_R+|P|_B$.

To show that the correspondence is bijective, 
we first show that, given an orbit $O \subseteq X \times Y$, 
if  $(P,O_1,O_2)$ is the corresponding triple then 
\[
(O, \pi_1|_O, \pi_2|_O) \cong \extendproduct{(P,O_1,O_2)} \,.
\]
Let $n=\dim(O)$. By Corollary~\ref{cor:single-orbit}, we have $\mathcal{P}_n(\mathbb{Q}) \cong O$. 
Let $g \colon O \stackrel{\cong}{\rightarrow} \mathcal{P}_n(\mathbb{Q})$ be the isomorphism between them. 
Then $f_{\{(O,F_1, O_1)\}} = f_{\{(\mathcal{P}_{|F_1|}(\mathbb{Q}), F_1, O_1)\}} \circ g$ and $f_{\{(O,F_2,O_2)\}} = f_{\{(\mathcal{P}_{|F_2|}(\mathbb{Q}),F_2,O_2)\}} \circ g$.
As a consequence, we have
\[ (O,\pi_1|_O,\pi_2|_O) \cong (\mathcal{P}_{|F_1|}(\mathbb{Q}), f_{\{(\mathcal{P}_{|F_1|}(\mathbb{Q}), F_1, O_1)\}},f_{\{(\mathcal{P}_{|F_2|}(\mathbb{Q}),F_2,O_2)\}}) \,. \]
By Definition~\ref{def:orbstringdef} and the above construction of $P$, 
the right-hand side of the above equation equals $\extendproduct{(P, O_1, O_2)}$.
Hence, $(O, \pi_1|_O, \pi_2|_O) \cong \extendproduct{(P,O_1,O_2)}$.

Finally, for uniqueness, consider two tuples $(P, O_1, O_2)$ and $(P', O_1', O_2')$, such that
$\extendproduct{(P, O_1, O_2)} \cong \extendproduct{(P', O_1', O_2')}$.
We let $F_1$ and $F_2$ denote the strings from Definition~\ref{def:orbstringdef} for $\extendproduct{(P, O_1, O_2)}$, 
and similarly $F_1'$ and $F_2'$ the strings for $\extendproduct{(P', O_1', O_2')}$. 

Since $\extendproduct{(P, O_1, O_2)} \cong \extendproduct{(P', O_1', O_2')}$ we have $\mathcal{P}_{|P|}(\mathbb{Q})\cong \mathcal{P}_{|P'|}(\mathbb{Q})$, hence $|P| = |P'|$. 
Furthermore, by the isomorphism, for any $x\in \mathcal{P}_{|P|}(\mathbb{Q})$, there exists an $x'\in\mathcal{P}_{|P'|}(\mathbb{Q})$ such that 
\begin{equation}\label{eq:xinproof}
\begin{array}{rcl}
f_{\{(\mathcal{P}_{|P|}(\mathbb{Q}), F_1, O_1)\}}(x) &=& f_{\{(\mathcal{P}_{|P'|}(\mathbb{Q}), F_1', O_1')\}}(x')\,, \text{ and } \\ f_{\{(\mathcal{P}_{|P|}(\mathbb{Q}), F_2, O_2)\}}(x) &=& f_{\{(\mathcal{P}_{|P'|}(\mathbb{Q}), F_2', O_2')\}}(x') \,.
\end{array}
\end{equation}
Since $O_1$, $O_2$, $O_1'$ and $O_2'$ single-orbit, this implies $O_1 = O_1'$ and $O_2 = O_2'$. 

Moreover, by Lemma~\ref{lm:supp-product}, the least support of any element $x\in \mathcal{P}_{|P|}(\mathbb{Q})$
equals the least support of 
$(f_{\{(\mathcal{P}_{|P|}(\mathbb{Q}), F_1, O_1)\}}(x), f_{\{(\mathcal{P}_{|P|}(\mathbb{Q}), F_2, O_2)\}}(x))$.
But if we choose $x'$ corresponding to $x$ as in the previous paragraph, then 
by~\eqref{eq:xinproof} we obtain that the least support of $x$ equals the least support of $x'$. 
Hence $x=x'$. 
But this implies that 
$f_{\{(\mathcal{P}_{|P|}(\mathbb{Q}), F_1, O_1)\}} = f_{\{(\mathcal{P}_{|P'|}(\mathbb{Q}), F_1', O_1)\}}$
and $f_{\{(\mathcal{P}_{|P|}(\mathbb{Q}), F_2, O_2)\}} = f_{\{(\mathcal{P}_{|P'|}(\mathbb{Q}), F_2', O_2)\}}$,
hence also $F_1 = F_1'$ and $F_2 = F_2'$. But $F_1 = F_1'$ and $F_2 = F_2'$ can hold only if $P$ and $P'$ are equal. From this, we conclude that $(P, O_1, O_2) = (P', O_1', O_2')$.
\end{proof}

From the above proposition it follows that we can generate the product $X\times Y$ simply by enumerating all valid strings $P$ for all pairs of orbits $(O_1, O_2)$ of $X$ and $Y$.
Given this, we can calculate the multiset representation of a product from the multiset representations of both factors.

\begin{theorem}\label{thm:prod-combinatorics}
For $X \cong \extendset{f}$ and $Y \cong \extendset{g}$ we have $X \times Y \cong \extendset{h}$, where
\begin{align*}
h(n) = \sum_{\substack{0 \le i, j \le n \\ i+j \ge n}} f(i)g(j){{n}\choose{j}}{{j}\choose{n-i}}.
\end{align*}
\end{theorem}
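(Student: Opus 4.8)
The plan is to use Proposition~\ref{orbstring} to count, for each $n$, the number of orbits of dimension $n$ in $X \times Y$. By that proposition, orbits $O \subseteq X \times Y$ with $\dim(O) = n$ are in bijection with triples $(P, O_1, O_2)$ where $O_1$ is an orbit of $X$, $O_2$ is an orbit of $Y$, and $P \in \{L,R,B\}^{*}$ is a valid string of length $n$ with respect to $O_1, O_2$. So $h(n)$ is obtained by summing, over all pairs of orbits $(O_1, O_2)$, the number of valid strings of length $n$. Since $X \cong [f]^{o}$ has exactly $f(i)$ orbits of dimension $i$ and similarly $g(j)$ orbits of dimension $j$ for $Y$, I would group the sum by $i = \dim(O_1)$ and $j = \dim(O_2)$, giving $h(n) = \sum_{i,j} f(i) g(j) \cdot N(n,i,j)$, where $N(n,i,j)$ is the number of strings $P \in \{L,R,B\}^n$ with exactly $i$ letters among $\{L,B\}$ (so that $|P|_L + |P|_B = \dim(O_1) = i$) and exactly $j$ letters among $\{R,B\}$ (so $|P|_R + |P|_B = j$).

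The core of the proof is then the purely combinatorial identity $N(n,i,j) = \binom{n}{j}\binom{j}{n-i}$, together with identifying the range of $(i,j)$ for which it is nonzero. I would count as follows: a valid string of length $n$ is determined by choosing which positions carry $R$ or $B$ (these are the "contributes to $O_2$" positions, of which there must be $j$) and then, among those $j$ positions, which carry $B$; the remaining $n - j$ positions must all be $L$ (they cannot be $R$ or $B$ since those are already placed, and they must contribute to $O_1$... actually they are the $L$-positions). Let $k = |P|_B$ be the number of $B$'s. Then $|P|_R = j - k$ and $|P|_L = i - k$, and these must account for all positions: $(i-k) + (j-k) + k = n$, i.e.\ $k = i + j - n$. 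So $k$ is forced, and the count is: choose the $j$ positions that are $R$ or $B$ in $\binom{n}{j}$ ways, then choose the $k = i+j-n$ of those that are $B$ in $\binom{j}{i+j-n} = \binom{j}{n-i}$ ways; the other $n-j$ positions are all $L$. This gives $N(n,i,j) = \binom{n}{j}\binom{j}{n-i}$. For this to be nonzero we need $0 \le k = i+j-n$, i.e.\ $i + j \ge n$, and we need $k \le j$ (i.e.\ $i \le n$), $k \le i$ (i.e.\ $j \le n$), and $j \le n$, $i \le n$ — precisely the constraints $0 \le i, j \le n$ and $i + j \ge n$ appearing in the sum. Outside that range $N(n,i,j) = 0$, consistent with the binomial coefficients vanishing, so extending or restricting the summation range is harmless.

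Finally I would assemble: $h(n) = \sum_{i,j} f(i)g(j) N(n,i,j) = \sum_{\substack{0 \le i,j \le n\\ i+j \ge n}} f(i)g(j)\binom{n}{j}\binom{j}{n-i}$, which is exactly the claimed formula, and note that $\{n \mid h(n) \neq 0\}$ is finite since $f$ and $g$ have finite support, so that $[h]^{o}$ is a well-defined orbit-finite nominal set and the isomorphism $X \times Y \cong [h]^{o}$ follows from Proposition~\ref{thm:pres-nom} (uniqueness of the representing function) once we know the orbit counts agree in each dimension. The main obstacle is getting the combinatorial bookkeeping exactly right — in particular matching the two binomial factors $\binom{n}{j}\binom{j}{n-i}$ to a concrete counting procedure and double-checking the edge cases so that the stated summation bounds coincide with the support of $N(n,i,j)$; the rest is a routine reorganisation of the sum provided by Proposition~\ref{orbstring}.
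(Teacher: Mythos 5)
Your proposal is correct and follows essentially the same route as the paper's proof: both enumerate, for each pair of orbits $(O_1,O_2)$ of dimensions $(i,j)$, the valid strings $P$ of length $n$ (with $|P|_L = n-j$, $|P|_R = n-i$, $|P|_B = i+j-n$) via Proposition~\ref{orbstring} and count them as $\binom{n}{j}\binom{j}{n-i}$. You merely spell out the combinatorial counting and the final appeal to uniqueness of the multiset representation, which the paper leaves implicit.
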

\begin{proof}
Every string $P\in\{L,R,B\}^{*}$ of length $n$ with $|P|_L = n-j$, $|P|_R = n-i$ and $|P|_B = i+j-n$ satisfies the requirements of Lemma~\ref{orbstring}, and hence describes a unique orbit for every pair of orbits $O_1$ and $O_2$ where the least support of the elements of $O_1$ has size $i$, and the least support of elements of $O_2$ have size $j$. Combinatorics tells us that there are ${{n}\choose{j}}{{j}\choose{n-i}}$ such strings. Summing over all $i\ge 0$, $j\ge 0$ such that $i+j-n$, $n-j$ and $n-i$ are positive, and multiplying with the number of orbits of the required size gives the result.
\end{proof}

\begin{example}\label{ex:prodstrings}
To illustrate some aspects of the above representation, let us use it to calculate the product of Example~\ref{ex:product}.
First, we observe that both $\Q$ and $S=\{(a,b) \in \Q^2 \mid a < b\}$ consist of a single orbit.
Hence any orbit of the product corresponds to a triple $(P, \Q, S)$, where the string $P$ satisfies $|P|_L+|P|_B = \dim(\Q) = 1$ and $|P|_R+|P|_B = \dim(S) = 2$.
We can now find the orbits of the product $\Q \times S$ by enumerating all strings satisfying these equations.
This yields:
\begin{itemize}
\item LRR, corresponding to the orbit $\{(a,(b,c)) \mid a,b,c\in\Q, a < b < c\}$,
\item RLR, corresponding to the orbit $\{(b,(a,c)) \mid a,b,c\in\Q, a < b < c\}$,
\item RRL, corresponding to the orbit $\{(c,(a,b)) \mid a,b,c\in\Q, a < b < c\}$,
\item RB, corresponding to the orbit $\{(b,(a,b)) \mid a,b\in\Q, a < b\}$, and
\item BR, corresponding to the orbit $\{(a,(a,b)) \mid a,b\in\Q, a < b\}$.
\end{itemize}
Each product string fully describes the corresponding orbit.
To illustrate this, consider the string BR.
The corresponding bit strings for the projection functions are $F_1 = 10$ and $F_2 = 11$.
From the lengths of the string we conclude that the dimension of the orbit is $2$. The string $F_1$ further tells us that the left element of the tuple consists only of the smallest element of the support. The string $F_2$ indicates that the right element of the tuple is constructed from both elements of the support. Combining this, we find that the orbit is $\{(a,(a,b)) \mid a,b\in\Q, a < b\}$.
\end{example}

\subsection{Summary}
We summarise our concrete representation in the following table.
Propositions~\ref{thm:pres-nom}, \ref{eqimap_rep} and \ref{orbstring} correspond to the three rows in the table.

\begin{table}[h]
\centering
\renewcommand{\arraystretch}{1.2} 
\begin{tabular}{l|p{5.7cm}}
	\emph{Object} & \emph{Representation}  \\
	\hline
	Single orbit $O$                & Natural number $n = \dim(O)$ \\
	Nominal set $X = \bigcup_i O_i$ & Multiset of these numbers  \\
	\hline
	Map from single orbit $f \colon O \to Y$ & The orbit $f(O)$ and a bit string $F$  \\
	Equivariant map $f \colon X \to Y$       & Set of tuples $(O, F, f(O))$, one for each orbit \\
	\hline
	Orbit in a product $O \subseteq X \times Y$ & The corresponding orbits of $X$ and $Y$, and a string $P$ relating their supports  \\
	Product $X \times Y$                        & Set of tuples $(P, O_X, O_Y)$, one for each orbit  \\
\end{tabular}
\renewcommand{\arraystretch}{1.0}
\caption{Overview of representation.}\label{tab:overview}
\end{table}

Notice that in the case of maps and products, the orbits are inductively represented using the concrete representation.
As a base case we can represent single orbits by their dimension.

\subsection{Comparison to \emph{Families of symmetries} by Ciancia et al.}
\label{sec:rel-work-ciancia}

\newcommand*{\cat}[1]{\textbf{#1}} 
\newcommand*{\catC}{\mathcal{C}}

The representation we have derived so far is the same as in the work of Ciancia et al~\cite{CianciaKM10}.
In this section we elaborate on this equivalence.
Their particular result is an encoding of nominal sets (in a general sense) as objects in the category $\cat{Fam}(\cat{Sym}(\catC)^{\textrm{op}})$, for a suitably picked $\catC$.

In our setting of the total order symmetry, we have to take $\catC$ to be the small category of finite sets with monotone injections.
Concretely, the objects of $\catC$ are the sets $\{1, \ldots, n\}$ for each $n \in \N$ and the maps of $\catC$ are injective and order-preserving functions $f \colon \{1, \ldots, n\} \to \{1, \ldots, m\}$.
(Note that an object $\{1, \ldots, n\}$ can just as well be encoded by the number $n$.)
By Lemma~\ref{group_trivial}, we note that $\cat{Sym}(\catC)$ is isomorphic to $\catC$.
And so their representation $\cat{Fam}(\cat{Sym}(\catC)^{\textrm{op}}) \cong \cat{Fam}(\catC^{\textrm{op}})$ also consists of multisets of natural numbers, as does ours (see Table~\ref{tab:overview}).
The maps are encoded per member in the family, together with local (backwards) maps.
We have specialised this encoding of maps as follows.
A monotone injection $f \colon n \to m$ can be recovered from its image $\im(f) \subseteq m$.
Such a subset can be encoded as a bit string, which is what we use.

They also provide a concrete description of the categorical product, in terms of the \emph{multi-coproduct} (see~\cite[Section~5]{CianciaKM10}).
We have shown that an element in the multi-coproduct can be represented concretely by a certain string.

Finally, we mention \cite[Theorem~4.12]{CianciaKM10} which states that there is an equivalence $\cat{Fam}(\cat{Sym}(\catC)^{\textrm{op}}) \to \cat{Set}^{\catC}_\diamond$.
Since we have the same representation of nominal sets, this suggests that the category of nominal sets over $(\Q, <)$ is equivalent to the category $\cat{Set}^{\catC}_\diamond$ of (wide) pullback preserving functors from $\cat{C}$ to $\cat{Set}$.
However, in this paper we have not defined the categorical underpinnings to further elaborate on this equivalence.

\section{\cplusplus{} Implementation of \ONS{}}
\label{sec:impl}
\lstset{basicstyle=\ttfamily}

The ideas outlined above have been implemented in the \cplusplus{} library \ONS{}.\footnote{\ONS{} can be found at \url{https://github.com/davidv1992/ONS}}
The library can represent orbit-finite nominal sets and their products, (disjoint) unions and maps. A full technical description of what it can do, and how to use it, is given in the documentation included with \ONS{}.

Let us start here by showing an example program to calculate the product of the sets $\{(a,b)\mid a<b\}$ and $\mathbb{Q}$
(see Example~\ref{ex:prodstrings}). The program below calculates this product, and then prints one element for each orbit of the result.

{\small
\begin{lstlisting}[language=C++]
nomset<rational> A = nomset_rationals();
nomset<pair<rational, rational>> B({rational(1),rational(2)});

auto AtimesB = nomset_product(A, B);   // compute the product
for (auto orbit : AtimesB)
	cout << orbit.getElement() << " ";
\end{lstlisting}
}

In the first line, we create a nominal set $A$, and initialise it with the built-in set $\mathbb{Q}$. The type of such a nominal set variable is \lstinline+nomset<T>+, where $T$ is the type of the elements. In case of $A$ this is the \lstinline+rational+ type.

In the second line, we create the set $B$ containing the elements of $\{(a,b)\mid a<b\}$. To do this, we instruct the constructor of $B$ to create the minimal nominal set containing the element $(1,2)$. This creates the nominal set with the orbit of $(1,2)$, which is exactly the set $\{(a,b)\mid a<b\}$.

Having created these sets, it then computes the product using the function \lstinline+nomset_product+. This returns the product, of type \lstinline+nomset<pair<A,B>>+, where \lstinline+A+ and \lstinline+B+ are the types of the elements of $A$ and $B$ respectively. In our case, this means that the result is a nominal set containing elements of type \lstinline+pair<rational,pair<rational,rational>>+.

Finally, we loop over the orbits of the result, stored in the variable \lstinline+AtimesB+, with \lstinline+for (auto orbit : AtimesB)+. This returns an \lstinline+orbit+ object for each orbit in the nominal set \lstinline+AtimesB+. These objects describe the properties of the individual orbits of a nominal set. We use it here to get an element (with \lstinline+.getElement()+), which is printed through standard out.

Running this code gives the following output
(`\texttt{\small /1}' signifies the denominator):
{\small
\begin{lstlisting}
(1/1,(2/1,3/1)) (1/1,(1/1,2/1)) (2/1,(1/1,3/1))
(2/1,(1/1,2/1)) (3/1,(1/1,2/1))
\end{lstlisting}
}

We see here a list of five elements, each corresponding to a single orbit of the product $\mathbb{Q}\times\{(a,b)\mid a < b\}$.

\subsection{Core functionality}

The main implementation of nominal sets and equivariant functions in the \ONS{} library is split up in three main concepts: 
\begin{enumerate}
	\item \lstinline+orbit<T>+, representing orbits;
	\item \lstinline+nomset<T>+, representing nominal sets, containing a number of \lstinline+orbit<T>+ objects;
	\item \lstinline+eqimap+, equivariant functions.
\end{enumerate}
The \lstinline+orbit<T>+ objects contain a complete description of a single orbit of elements of type $T$. They allow querying of basic properties such as the size of the least support (with \lstinline+.supportSize+), checking whether an element is a member of the orbit (with \lstinline+.isElement+) and extraction of sample  elements (with \lstinline+.getElement+).

Next, \lstinline+nomset<T>+ is used to represent entire sets. Its functionality includes checking whether an element or other set is contained in the set (with \lstinline+.contains+), iterating over the orbits (see above) and querying for the size of the set (with \lstinline+.size+).

For working with nominal sets, \ONS{} also provides implementations of common set operations. Examples of these are set union (with \lstinline+nomset_union+), intersection (with \lstinline+nomset_intersect+) and set products (with \lstinline+nomset_product+).

The \ONS{} library also implements support for filtering (with \lstinline+nomset_filter+) and mapping (with \lstinline+nomset_map+) of nominal sets. These take an (equivariant) function as argument, which can either be given as an equivariant function object \lstinline+eqimap+ or as a \cplusplus{} function or function object, as long as the resulting behaviour when invoked is equivariant.

Finally, objects of type \lstinline+eqimap+ can be used to represent dynamically generated equivariant functions. They implement an evaluation, allowing the application of the function to concrete argument values. They also contain several functions for querying properties of the function represented (such as whether elements are in its domain, with \lstinline+.inDomain+), and manipulating the function represented (such as extending the mapping, with \lstinline+.add+).

\subsection{Another example}

Let us now consider a slightly more complicated piece of code.
It refines a relation $R$ (called \lstinline+previousPartition+) on the states $Q$ of an automaton, using a transition function $f\colon Q\times A\rightarrow A$ (called \lstinline+transitionFunction+), over an alphabet $A$ (called \lstinline+alphabet+). 
It returns as a result the set 
\[
\{(q,q')\in R \mid \forall a\in A\colon (f(q,a), f(q',a)) \in R\} \,.
\]
The code is as follows.
{\small
\begin{lstlisting}[language=C++]
template <typename Q, typename A>
nomset<pair<Q,Q>> refineRelation(
	nomset<A> alphabet,
	nomset<pair<Q,Q>> previousRelation,
	eqimap<pair<Q,A>, Q> transitionFunction) {

	// calculate R x A
	nomset<pair<pair<Q,Q>,A>> transitions =
		nomset_product(previousRelation, alphabet);

	// Find those where (f(q,a), f(q',a)) not in R
	nomset<pair<pair<Q,Q>,A>> invalid =
		nomset_filter(transitions, [&](pair<pair<Q,Q>,A> input) {
			Q state1 = input.first.first;
			Q state2 = input.first.second;
			A letter = input.second;
			
			Q result1 = transitionFunction({state1, letter});
			Q result2 = transitionFunction({state2, letter});
			
			return !previousRelation.contains({result1, result2});
		});

	// Strip away alphabet
	nomset<pair<Q,Q>> toRemove =
		nomset_map(invalid, [](pair<pair<Q,Q>,A> input) {
			return input.first;
		});

	// Calculate result
	return nomset_minus(previousRelation, toRemove);
}
\end{lstlisting}
}
This result is computed in four steps. First, it calculates the set of all transition pairs it still needs to consider (transitions), using \lstinline+nomset_product+ to calculate the product $R\times A$.

Next, it uses \lstinline+nomset_filter+ to select those elements $((q,q'),a)$ for which $(f(q,a), f(q',a)) \notin R$. Note that the function which calculates this is specified as a plain \cplusplus{} lambda, whose behaviour is equivariant.

For modifying the relation, only the first part of $((q,q'),a)$ is relevant, so in the third step we use \lstinline+nomset_map+ to project out the alphabet letters of the witnesses in the set \lstinline+invalid+.

This leaves the code with a set of pairs $(q,q')$ which need to be removed from $R$ to produce the final result. This is calculated using the set minus operation \lstinline+nomset_minus+, resulting in the refined partition.

\section{Haskell Implementation of \ONS{}}
\label{sec:haskell}

We have implemented a similar library in Haskell, called \ONShs.\footnote{Available at \url{https://github.com/Jaxan/ons-hs}}
This showcases the generality of the theoretical characterisation in Section~\ref{sec:total-order}.

At the core, there is the type class \hask{Nominal}.
\begin{haskell}
class Nominal a where
  type Orbit a :: *
  toOrbit :: a -> Orbit a
  getElement :: Orbit a -> Support -> a
  support :: a -> Support
  index :: Proxy a -> Orbit a -> Int
\end{haskell}
It provides to basic functionality, \hask{toOrbit} and \hask{getElement}, to convert between elements (of type \hask{a}) and orbits of elements (of type \hask{Orbit a}).
The functions \hask{support} and \hask{index} are utility functions, returning the (least) support of an element, and the dimension of an orbit.

Instances are defined for basic data types such as the type of rational numbers, \hask{Atom}.
Other instances can be derived for any algebraic data structure (following Table~\ref{tab:overview}).
For example, the data type for the states of the automaton in Figure~\ref{fig:lint-minimal} can be defined as:
\begin{haskell}
data State = Q0 | Q1 Atom | Q2 Atom Atom | Q3 Atom Atom | Q4
  deriving (Eq, ...)
  deriving Nominal via Generic State
\end{haskell}

Deriving instances with generics make it easy for the user to use the library.
One can also easily define trivial instances, where the group action is defined as the identity function.
This is used for the \hask{EquivariantSet} data structure.
This data type provides an interface to (infinite) nominal sets and the usual set constructions are defined.
Some of these functions are shown below (for brevity, we have omitted the type class context from the type signatures).
\begin{haskell}
data EquivariantSet a = ...
  deriving Nominal via Trivial (EquivariantSet a)

map :: (a -> b) -> EquivariantSet a -> EquivariantSet b
filter :: (a -> Bool) -> EquivariantSet a -> EquivariantSet a
product :: EquivariantSet a -> EquivariantSet b -> EquivariantSet (a, b)
rationals :: EquivariantSet Atom
\end{haskell}

Function arguments (in, e.g., \hask{map} and \hask{filter}) are required to be equivariant.
Finally, a data type for equivariant maps, \hask{EquivariantMap} is provided with the expected functions for a map data structure.

With these functions, we can define all the states of the automaton in Figure~\ref{fig:lint-minimal} and the accepting states can easily be filtered out.
\begin{haskell}
states = fromList [Q0, Q4]  <>  map Q1 rationals
  <> map (uncurry Q2) (product rationals rationals)
  <> map (uncurry Q3) (product rationals rationals)
acceptingStates = filter accept states where
  accept (Q2 a b) = a < b
  accept _ = False
\end{haskell}

\section{Complexity of set operations}\label{sec:complexity}

Since our implementations are directly based on representing orbits, it is possible to derive concrete complexities for the set operations. To simplify such an analysis, we make the following assumptions on operations on orbits:
\begin{itemize}
\item The comparison of two orbits takes $O(1)$.
\item Constructing an orbit from an element takes $O(1)$.
\item Checking whether an element is in an orbit takes $O(1)$.
\end{itemize}
These assumptions are justified as each of these operations takes time proportional to the size of the representation of an individual orbit, which in practice is small and approximately constant. For instance, the orbit $\mathcal{P}_n(\Q)$ is represented by just the integer $n$ and its type.

Furthermore, two of the operations considered make use of an external function. Since these can be implemented in a variety of ways, and the time complexity of actually invoking these functions is highly dependent both on what it calculates, and the specific way it is implemented in the program, we will here simply consider invocations of these functions to take $O(1)$ time.

For the notation in the following statement, recall that $\Nsize(X)$ denotes the number of orbits of $X$, and $\dim(X)$ the maximal size of the least support of its elements.

\begin{theorem}\label{thmcomplexity}
If nominal sets are implemented with a tree-based set structure (as in \ONS{}), the complexity of the following set operations is as follows:

\begin{center}
\begin{tabular}{r|l}
Operation & Complexity\\
\hline
Test $x \in X$ & $O(\log \Nsize(X))$\\
Test $X\subseteq Y$ & $O(\min(\Nsize(X)+\Nsize(Y), \Nsize(X)\log \Nsize(Y)))$\\
Calculate $X \cup Y$ & $O(\Nsize(X)+\Nsize(Y))$\\
Calculate $X \cap Y$ & $O(\Nsize(X)+\Nsize(Y))$\\
Calculate $\{x \in X \mid p(x)\}$ & $O(\Nsize(X))$\\
Calculate $\{f(x) \mid x \in X\}$ & $O(\Nsize(X)\log \Nsize(X))$\\
Calculate $X\times Y$ & $O(\Nsize(X\times Y)) \,\subseteq\, O(3^{\dim(X)+\dim(Y)}\Nsize(X)\Nsize(Y))$\\
\end{tabular}
\end{center}

The functions $p\colon X\rightarrow 2$ and $f\colon X\rightarrow Y$ are user defined, and assumed to take $O(1)$ time per invocation.
\end{theorem}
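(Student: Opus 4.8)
The plan is to handle each row of the table separately, under the standing assumption that a nominal set is stored as a balanced binary search tree whose nodes are its finitely many orbits, kept in a fixed total order on orbit representations, each orbit occurring once. Given the stated $O(1)$ cost for comparing, constructing, and testing membership of a single orbit, such a tree with $\Nsize(X)$ nodes supports lookup and insertion in $O(\log\Nsize(X))$ and an in-order traversal in $O(\Nsize(X))$, and a balanced tree can be (re)built from an already-sorted list of orbits in linear time; these are the only data-structure facts used.

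For the first five rows the arguments are short. Membership $x\in X$: build the orbit of $x$ in $O(1)$ and search the tree, $O(\log\Nsize(X))$. For $X\subseteq Y$ there are two strategies — a simultaneous in-order walk of both sorted trees, $O(\Nsize(X)+\Nsize(Y))$, or one lookup in $Y$ per orbit of $X$, $O(\Nsize(X)\log\Nsize(Y))$ — and one runs whichever is cheaper, giving the stated minimum. Union and intersection are a single merge of the two sorted orbit sequences followed by a linear-time rebuild, so $O(\Nsize(X)+\Nsize(Y))$ (using that the actions agree on common orbits, so equality of orbits is the $O(1)$ comparison). For $\{x\in X\mid p(x)\}$, equivariance of $p$ makes it constant on each orbit, so it suffices to evaluate $p$ once per orbit on a representative element ($O(1)$ by assumption) and keep the surviving orbits, which remain sorted; total $O(\Nsize(X))$. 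For $\{f(x)\mid x\in X\}$, Lemma~\ref{lem:eqiorbit} says $f$ sends each orbit $O$ of $X$ to a single orbit $f(O)$ of $Y$, obtained in $O(1)$ by applying $f$ to a representative; the subtlety is that these image orbits may coincide and need not be produced in order, so one inserts the $\Nsize(X)$ resulting orbits into a fresh tree, costing $O(\Nsize(X)\log\Nsize(X))$.

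The product is the case that requires real work. By Proposition~\ref{orbstring} the orbits of $X\times Y$ are in bijection with triples $(P,O_1,O_2)$ with $O_1\subseteq X$, $O_2\subseteq Y$ orbits and $P$ a valid string, so one enumerates product orbits by looping over the $\Nsize(X)\Nsize(Y)$ pairs of orbits and, for each pair, listing all valid strings. Because distinct triples yield distinct orbits, no deduplication or sorting is needed and each produced orbit is touched $O(1)$ times, giving $O(\Nsize(X\times Y))$; this also absorbs the $\Nsize(X)\Nsize(Y)$ loop overhead, since every pair of orbits contributes at least one product orbit (e.g.\ the string $L^{\dim(O_1)}R^{\dim(O_2)}$) and hence $\Nsize(X)\Nsize(Y)\le\Nsize(X\times Y)$. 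For the explicit bound, a valid string for orbits of dimensions $i\le\dim(X)$ and $j\le\dim(Y)$ has length at most $i+j\le\dim(X)+\dim(Y)$ over the three-letter alphabet $\{L,R,B\}$, so there are at most $\sum_{k=0}^{\dim(X)+\dim(Y)}3^k = O(3^{\dim(X)+\dim(Y)})$ of them per pair; multiplying by the number of pairs gives the inclusion $O(\Nsize(X\times Y))\subseteq O(3^{\dim(X)+\dim(Y)}\Nsize(X)\Nsize(Y))$, matching the combinatorial count of Theorem~\ref{thm:prod-combinatorics}.

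The main obstacle, and the only place the representation theory is genuinely used, is the product: one must invoke the bijectivity (not just surjectivity) in Proposition~\ref{orbstring} to conclude that enumeration is output-sensitive with no logarithmic overhead, and then bound the number of valid strings per orbit pair. The other care points are routine: equivariant predicates are orbit-constant and equivariant maps are orbit-to-orbit, so per-orbit work suffices; and every tree manipulation above is linear on already-sorted data, the single unavoidable logarithmic factor being the image operation, whose output order is not controlled by the input.
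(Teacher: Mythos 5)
Your proof is correct and follows essentially the same route as the paper's: per-orbit reasoning with the stated $O(1)$ orbit primitives, simultaneous in-order walks for inclusion/union/intersection, a sort (or iterated insertion) only for the image operation, and enumeration of valid strings per orbit pair for the product. You in fact supply slightly more justification than the paper does for the product case (the observation that $\Nsize(X)\Nsize(Y)\le\Nsize(X\times Y)$ absorbs the loop overhead, and the explicit appeal to the bijectivity in Proposition~\ref{orbstring}).
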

\begin{proof}
Each of the statements will be proven individually:

\emph{Membership.}
To decide $x \in X$, we first construct the orbit containing $x$, which is done in constant time.
Then we use a logarithmic lookup to decide whether this orbit is in our set data structure.
Hence membership checking is $O(\log(\Nsize(X)))$.

\emph{Inclusion.}
We can check $X \subseteq Y$ in $O(N(X)\log(N(Y)))$ by repeated membership.
It is also possible to do a simultaneous in-order traversal of both sets, which takes $O(N(X)+N(Y))$ time.
The implementation uses a cutoff on the size of $X$ relative to $Y$ to deal with this, giving a time complexity of $O(min(N(X)+N(Y), N(X)\log(N(Y))))$.

\emph{Union and Intersection.}
A simultaneous traversal through both sets $X$ and $Y$ can be used to compute their union and intersection.
This gives a complexity of $O(N(X)+N(Y))$ for intersections and unions.

\emph{Filtering.}
Filtering a nominal set $X$ using some equivariant predicate $p$ can be done in linear time, as the results are obtained in order, giving a complexity of $O(N(X))$, assuming the time complexity of the function $p$ to be constant.

\emph{Mapping.}
Mapping is a harder than filtering, as the outputs of $f$ may be out of order.
Hence, for a tree-based implementation of sets, a sorting step is needed (or equivalently, iterated insertions), which gives a complexity to $O(N(X)\log(N(X)))$, again assuming the time complexity of the function $f$ to be constant.

\emph{Products.}
Calculating the product of two nominal sets is the most complicated construction.
For each pair of orbits in the original sets $X$ and $Y$, all product orbits need to be generated.
Each product orbit itself is constructed in constant time.
By generating these orbits in-order, the resulting set takes $O(\Nsize(X\times Y))$ time to construct.

We can also give an explicit upper bound for the number of orbits in terms of the input.
Recall that orbits in a product are represented by strings of length at most $\dim(X)+\dim(Y)$.
(If the string is shorter, we pad it with one of the symbols.)
Since there are three symbols ($L, R$ and $B$), the product of $X$ and $Y$ will have at most $3^{\dim(X)+\dim(Y)}\Nsize(X)\Nsize(Y)$ orbits.
It follows that taking products has time complexity of $O(3^{\dim(X)+\dim(Y)}\Nsize(X)\Nsize(Y))$.
\end{proof}

Using the above complexity results on individual operations, we can derive the complexity of algorithms using nominal sets. We will demonstrate this here using Moore's algorithm. Recall from Section~\ref{sec:minimisation}:

\begin{algorithm}
\caption{Moore's minimisation algorithm for nominal DFAs}
\begin{algorithmic}[1]
\Require{ Nominal automaton $(S,A,F,\delta)$.}
\State{$i \leftarrow 0$, ${\equiv_{-1}} \leftarrow S\times S$, ${\equiv_{0}} \leftarrow F\times F \cup (S\backslash F)\times (S\backslash F)$}
\While{${\equiv_i} \,\ne\, {\equiv_{i-1}}$}
\State{${\equiv_{i+1}} \leftarrow \{(q_1, q_2) \mid q_1 \equiv_i q_2 \,\wedge\, \forall a\in A, \delta(q_1,a) \equiv_i \delta(q_2,a) \}$}
\State{$i \leftarrow i+1$}
\EndWhile
\State{$E\leftarrow S/_{\equiv_i}$}
\State{$F_E \leftarrow \{e\in E \mid \forall s\in e, s\in F\}$}
\State{Let $\delta_E$ be the map such that, if $s\in e$ and $\delta(s,a)\in e'$, then $\delta_E(e,a) = e'$.}
\State{\Return{$(E,A,F_E,\delta_E)$.}}
\end{algorithmic}
\end{algorithm}

\begin{theorem}\label{thm:moore}
The runtime complexity of Moore's algorithm on nominal deterministic automata is $O(3^{5k} k \Nsize(S)^3 \Nsize(A))$, where $k=\dim(S\cup A)$.
\end{theorem}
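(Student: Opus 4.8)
The plan is to bound, in turn, the cost of one iteration of the while loop, the number of iterations, and the work done outside the loop, and then to multiply; throughout write $k=\dim(S\cup A)$, so $\dim(S)\le k$ and $\dim(A)\le k$. \textbf{Cost of one iteration.} During the loop $\equiv_i$ is always an equivariant subset of $S\times S$, so by the product bound of Theorem~\ref{thmcomplexity} we have $\Nsize(\equiv_i)\le\Nsize(S\times S)\le 3^{2k}\Nsize(S)^2$ and $\dim(\equiv_i)\le\dim(S\times S)\le 2k$. I would compute $\equiv_{i+1}$ from $\equiv_i$ using only operations analysed in Theorem~\ref{thmcomplexity}: form the product $\equiv_i\times A$; filter out the ``bad'' triples $((q_1,q_2),a)$ for which $(\delta(q_1,a),\delta(q_2,a))\notin\equiv_i$ (each test being two evaluations of $\delta$ plus one membership query in $\equiv_i$); map this bad set along the projection $(S\times S)\times A\to S\times S$; and subtract the result from $\equiv_i$. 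This reformulation is also what makes the ``$\forall a\in A$'' of line~3 computable despite $A$ being infinite: the quantifier over $A$ becomes a finite computation over the orbits of $\equiv_i\times A$. The product dominates, with $\Nsize(\equiv_i\times A)\le 3^{\dim(\equiv_i)+\dim(A)}\Nsize(\equiv_i)\Nsize(A)\le 3^{3k}\cdot 3^{2k}\Nsize(S)^2\Nsize(A)=3^{5k}\Nsize(S)^2\Nsize(A)$, and the other steps are no more expensive, so one iteration runs in $O(3^{5k}\Nsize(S)^2\Nsize(A))$ (folding the logarithmic factors from membership and map into the constant, consistent with the simplifying conventions of this section).

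\textbf{Number of iterations.} This is the crux. The trivial bound ``the partition strictly refines, hence at most $\Nsize(\equiv_0)\le\Nsize(S\times S)$ steps'' loses a factor $\Nsize(S)$, so instead I would use the potential $\Psi(i)=\Nsize(S/{\equiv_i})+\sum_{O\in S/\equiv_i}\dim(O)$ and show it strictly increases at every genuine refinement while staying $\le(k+1)\Nsize(S)$. For the upper bound: the quotient map $S\to S/\equiv_i$ is surjective and equivariant, so it cannot increase the number of orbits and, by Lemma~\ref{eqimap_contained}, cannot enlarge supports; hence $\Nsize(S/\equiv_i)\le\Nsize(S)$, every quotient orbit has dimension $\le k$, and $\Psi(i)\le(k+1)\Nsize(S)$. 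For monotonicity: whenever $\equiv_{i+1}\subsetneq\equiv_i$ there is a surjective equivariant map $\phi\colon S/\equiv_{i+1}\to S/\equiv_i$; by Lemma~\ref{lem:eqiorbit} it sends each orbit of the finer quotient onto a single orbit of the coarser one, so $\Nsize(S/\equiv_{i+1})\ge\Nsize(S/\equiv_i)$, and Lemma~\ref{eqimap_contained} applied orbit-wise gives $\sum_{O'}\dim(O')\ge\sum_O\dim(O)$; thus $\Psi$ is non-decreasing. If $\Psi(i+1)=\Psi(i)$ then $\phi$ must be a bijection on orbits that restricts to a dimension-preserving surjective equivariant map on each single orbit; by Proposition~\ref{eqimap_rep} such a map corresponds to the all-ones bit string, hence is an isomorphism, forcing $\equiv_{i+1}=\equiv_i$. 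Therefore $\Psi$ strictly increases whenever the loop does not terminate, so it runs at most $(k+1)\Nsize(S)=O(k\Nsize(S))$ times.

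\textbf{Assembling the bound.} Before the loop, computing $\equiv_0=F\times F\cup(S\setminus F)\times(S\setminus F)$ costs $O(3^{2k}\Nsize(S)^2)$ by Theorem~\ref{thmcomplexity}; after the loop, building $E=S/\equiv_i$, $F_E$, and $\delta_E$ amounts to one product $E\times A$, one map, and a handful of membership tests, costing $O(3^{2k}\Nsize(S)\Nsize(A))$; both are dominated by a single iteration. Multiplying $O(3^{5k}\Nsize(S)^2\Nsize(A))$ per iteration by $O(k\Nsize(S))$ iterations yields the claimed $O(3^{5k}k\Nsize(S)^3\Nsize(A))$. I expect the iteration-count argument to be the only real obstacle: it hinges on the monotone potential, which in turn relies on the facts that surjective equivariant maps never increase the orbit count, that equivariant maps never enlarge supports, and that a dimension-preserving surjection between single orbits is forced to be an isomorphism; everything else is bookkeeping on top of Theorem~\ref{thmcomplexity}.
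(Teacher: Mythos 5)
Your proposal is correct and follows essentially the same route as the paper: bound one iteration by the cost of the triple product ($3^{5k}\Nsize(S)^2\Nsize(A)$ orbits), bound the number of iterations by $O(k\Nsize(S))$, and multiply. Your potential function $\Psi$ is just a cleaner formalisation of the paper's informal counting argument (each genuine refinement either separates two orbits, which can happen at most $\Nsize(S)-1$ times, or splits an orbit on a support element, which can happen at most $\dim(S)$ times per orbit), so the two arguments are the same in substance.
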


\begin{proof}
	This is shown by counting operations, using the complexity results of set operations stated in Theorem~\ref{thmcomplexity}.
	We first focus on the while loop on lines 2 through 5.
	The runtime of an iteration of the loop is determined by line 3, as this is the most expensive step.
	Since the dimensions of $S$ and $A$ are at most $k$, computing $S \times S \times A$ takes $O(\Nsize(S)^2\Nsize(A)3^{5k})$.
	Filtering $S \times S$ using that then takes $O(\Nsize(S)^23^{2k})$.
	The time to compute $S \times S \times A$ dominates, hence each iteration of the loop takes $O(\Nsize(S)^2\Nsize(A)3^{5k})$.
	
	Next, we need to count the number of iterations of the loop.
	Each iteration of the loop gives rise to a new partition, which is a refinement of the previous partition.
	Furthermore, every partition generated is equivariant. Note that this implies that each refinement of the partition does at least one of two things:
	distinguish between two orbits of $S$ previously in the same element(s) of the partition, or distinguish between two members of the same orbit previously in the same element of the partition.
	The first can happen only $\Nsize(S)-1$ times, as after that there are no more orbits lumped together.
	The second can only happen $\dim(S)$ times per orbit, because each such a distinction between elements is based on splitting on the value of one of the elements of the support.
	Hence, after $\dim(S)$ times on a single orbit, all elements of the support are used up. Combining this, the longest chain of partitions of $S$ has length at most $O(k\Nsize(S))$.
	
	Since each partition generated in the loop is unique, the loop cannot run for more iterations than the length of the longest chain of partitions on $S$. It follows that there are at most $O(k\Nsize(S))$ iterations of the loop, giving the loop a complexity of $O(k\Nsize(S)^3\Nsize(A)3^{5k})$
	
	The remaining operations outside the loop have a lower complexity than that of the loop, hence the complexity of Moore's minimisation algorithm for a nominal automaton is $O(k\Nsize(S)^3\Nsize(A)3^{5k})$.
\end{proof}

The above theorem shows in particular that minimisation of nominal automata is fixed-parameter tractable (FPT) with the dimension as fixed parameter.
The complexity of Algorithm~\ref{alg:moore} for nominal automata is very similar to the $O(|S|^3 |A|)$ bound given by a naive implementation of Moore's algorithm for ordinary DFAs.
This suggest that it is possible to further optimise an implementation with similar techniques used for ordinary automata.

\section{Evaluation}
\label{sec:evaluation}

This section presents an experimental evaluation of \ONS{} and \ONShs{}, comparing them against
existing tools in two tasks related to nominal automata: learning and minimisation.

\subsection{The Tools: \ONS{}, \ONShs{}, \NLambda{} and \LOIS{}}
In order to evaluate our library \ONSboth{}, we compare it against two existing libraries for computing with nominal sets, \NLambda{} \cite{EPTCS207.3} and \LOIS{} \cite{kopczynski1716,kopczynski2017}.
We briefly describe how these tools work and what the differences are with \ONS{}.

Both \NLambda{} and \LOIS{} work symbolically.
Nominal sets are represented with set-builder expressions: values with variables and a first-order formula describing those values.
For example, the orbit $\{ \{a, b, c\} \mid a,b,c \in \Q, a < b < c \}$ is represented simply as:
\[ \{ \text{\hask{fromList} } [x_0, x_1, x_2] \mid x_0 < x_1 \wedge x_1 < x_2 \}. \]
(Here \hask{fromList} takes a list and constructs a set.)
In \ONS{}, orbits are represented more compactly: in this case, the integer $3$ suffices.
On the other hand, a set such as $\Q^3$ has a compact representation in \NLambda{} and \LOIS:
\[ \{ (x_0, x_1, x_2) \mid \top \} \]
and a larger representation in \ONS{}, consisting of 13 strings (see Section~\ref{subsec:products}).

Since \NLambda{} and \LOIS{} use formulas, many set operations are expressed by manipulating these formulas.
One of the crucial operations is determining whether a set is empty, which can be resolved by checking satisfiability of the formula.
To this end, these libraries use an SMT solver (by default, both libraries use Z3 \cite{MouraB08}).
Consequently, the runtime will depend on the size of these formulas, and both libraries have routines to simplify formulas.

\subsection{Benchmarks}

We evaluate the scalability of each library by implementing the automata minimisation algorithm and learning algorithm
discussed in Section~\ref{sec:aut}.
These are then tested on the following three sets of automata.

\paragraph{Structured automata}
We define the following automata.
\begin{itemize}
	\item[FIFO($n$)] Automata accepting valid traces of a finite FIFO data structure of size $n$.
	The alphabet is defined by two orbits: $\{\text{Put}(a) \mid a \in \Q \}$ and $\{\text{Get}(a) \mid a \in \Q \}$.
	\item[$ww(n)$] Automata accepting the language of words of the form $ww$, where $w \in \Q^{n}$.
	\item[$\Lmax$]
	The language $\Lmax$ where the last symbol is the maximum of previous symbols (Example~\ref{ex:minimal-automaton}).
	\item[$\Lint$] The language accepting a series of nested intervals (Example~\ref{ex:int-lang}).
\end{itemize}

The first two classes of structured automata are used as test cases in \cite{moerman2017learning}.
These two classes are also equivariant w.r.t.\ the equality symmetry.
The structured automata can be encoded directly in symbolic form in \NLambda{} or \LOIS{}.
In \ONS{}, this structure is lost and the algorithms operate purely on orbits.
Where applicable, the automata listed above were generated using the same code as used in \cite{moerman2017learning}, ported to the other libraries as needed.

The automaton accepting the FIFO language is not minimal.
It is based on the purely functional queue which has two lists of data values: one for pushing, one for popping.
If the list for popping is empty, the list for pushing is reversed and moved to the list for popping~\cite{Okasaki99}.
The use of two lists is redundant and hence the automaton is not minimal.

\paragraph{Orbit-wise random automata}
Besides structured automata, we generate orbit-wise random automata as follows.
The input alphabet is always $\mathbb{Q}$ and the number of orbits and dimension $k$ of the state space $S$ are fixed.
For each orbit in the set of states, its dimension is chosen uniformly at random between $0$ and $k$, inclusive.
Each orbit has a probability $\frac{1}{2}$ of consisting of accepting states.

To generate the transition function $\delta$, we enumerate the orbits of $S \times \Q$ and choose a target state uniformly from the orbits $S$ with small enough dimension.
The bit string indicating which part of the support is preserved is then sampled uniformly from all valid strings. We will denote these automata as rand$_{\Nsize(S),k}$.
The choices made here are arbitrary and only provide basic automata.
We note that the automata are generated orbit-wise and this may favour our tool.

\paragraph{Random automata with formulae}
The two classes above (orbit-wise random and structured) are very different in nature.
The random ones are defined orbit-wise (which is an advantage for \ONS{}),
whereas the structured ones hardly use the values in an interesting way.
To provide a middle-ground, we also generate random automata which use formulas on transitions.

For these automata, the state space is constructed from multiple copies of $\mathbb{Q}^n$.
We will refer to these copies as locations and we will refer to the number of locations in the state space as the size of the automaton.
We fix the size of the automaton, and then for each location we sample its dimension $n$ uniformly from $[0, k]$, where $k$ is a chosen constant.
This creates the set of states. Every location has a probability $\frac{1}{2}$ of being accepting. 
Note that $\mathbb{Q}^n$ consists of more than one orbit if $n>1$, and consequently, the number of orbits in the state of the resulting automata can vary.
The alphabet used is always the set $\mathbb{Q}$.

To generate the transition function, we generate a formula for each of the locations.
This is done by creating a tree, where every node is one operation.
The tree starts with an empty root, and is then expanded by repeatedly selecting one of the empty nodes and replacing it with either a logical operation (`and' or `or') with two new empty nodes, or by a literal, i.e., a comparison between two variables (either $<$, $=$, or $>$).
Both options occur with equal chance.
In the latter case, the variables are drawn from either the state values or the input value.
This process is repeated until either there are no more empty nodes, or the limit on the number of logical operators is reached, at which point the remaining empty nodes are filled with literals.
Finally, for each node in the tree we randomly invert the output or not.

These formulas are then used to create two edges: one for when the formula is true, and one for when the formula is false.
The target state is specified by randomly choosing a location, and randomly selecting which elements of the original state and input are kept in the target location (we allow for duplicate values).

\paragraph{Properties of the random automata}

Our main motivation for using the above described random automata is the unavailability of a good source of nominal automata used in practical applications. However, this makes it difficult to judge whether or not our random automata are representative for actual performance in real test cases.
We will show some properties of the generated automata so the reader can make their own judgement.
In particular, we will focus primarily on the degree to which the size of the automata is reduced during minimisation.

\begin{figure}
\centering
\begin{subfigure}[b]{0.33\textwidth}
\includegraphics[width=\textwidth]{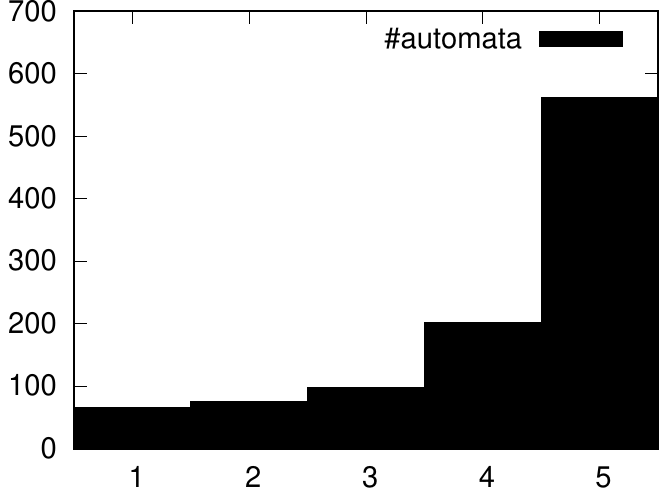}
\caption{$N(S) = 5, k = 1$}
\end{subfigure}~
\begin{subfigure}[b]{0.33\textwidth}
\includegraphics[width=\textwidth]{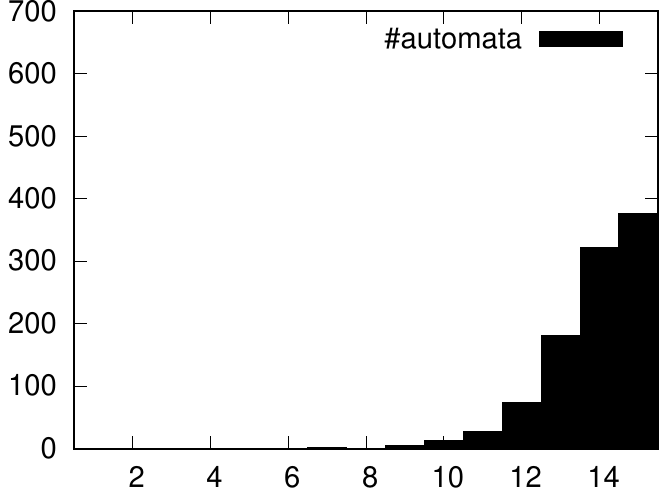}
\caption{$N(S) = 15, k = 1$}
\end{subfigure}~
\begin{subfigure}[b]{0.33\textwidth}
\includegraphics[width=\textwidth]{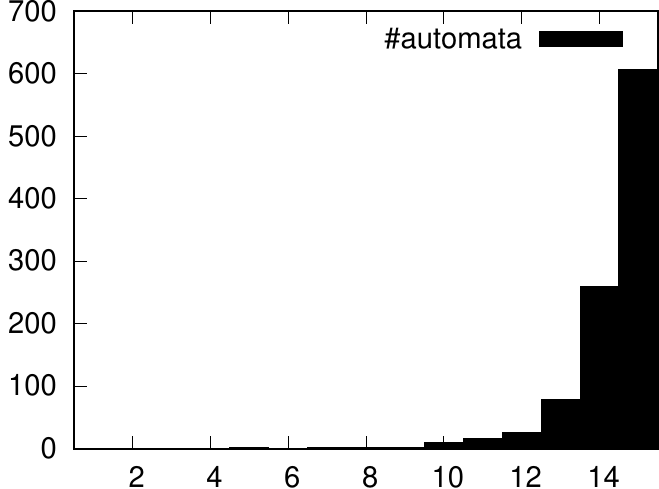}
\caption{$N(S) = 15, k = 3$}
\end{subfigure}
\caption{Histogram of the number of orbits after minimisation for orbit-wise random automata. Each figure shows $1000$ automata. The number of orbits before minimisation is $N(S)$ and the dimension is $k$.}
\label{fig:rnd1}
\end{figure}

For the orbit-wise generated automata, this is shown in Figure~\ref{fig:rnd1}.
It can be clearly seen that the vast majority of the generated automata is either minimal, or very close to being minimal.

\begin{figure}
\centering
\begin{subfigure}[b]{0.45\textwidth}
\includegraphics[width=\textwidth]{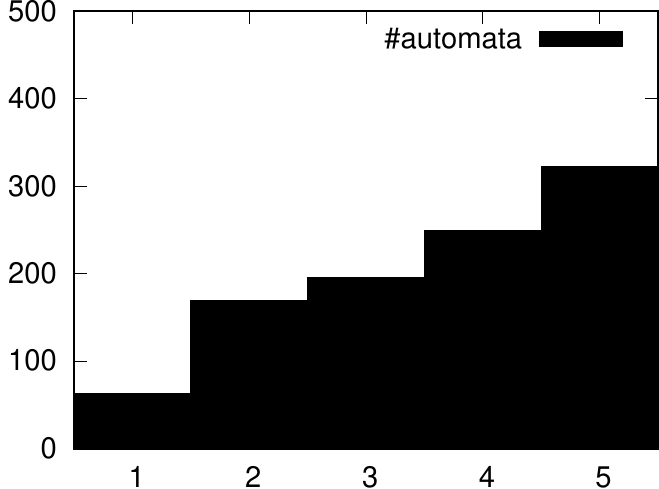}
\caption{$5$ locations, $k = 1$}
\end{subfigure}~
\begin{subfigure}[b]{0.45\textwidth}
\includegraphics[width=\textwidth]{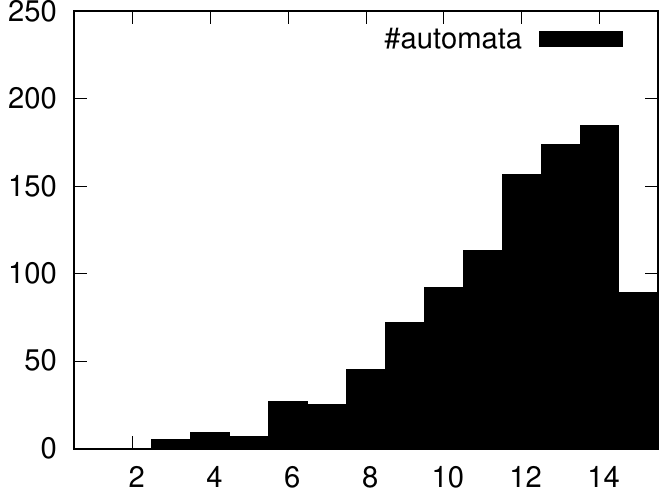}
\caption{$15$ locations, $k = 1$}
\end{subfigure}
\caption{Number of state orbits after minimisation for formula automata of size $5$ and $15$, of dimension $1$. The figure shows $1000$ automata. Note that because the dimension is $1$, the number of orbits before minimisation is always $5$ or $15$ respectively.}
\label{fig:for1}
\end{figure}
\begin{figure}
\centering
\begin{subfigure}[b]{0.45\textwidth}
\includegraphics[width=\textwidth]{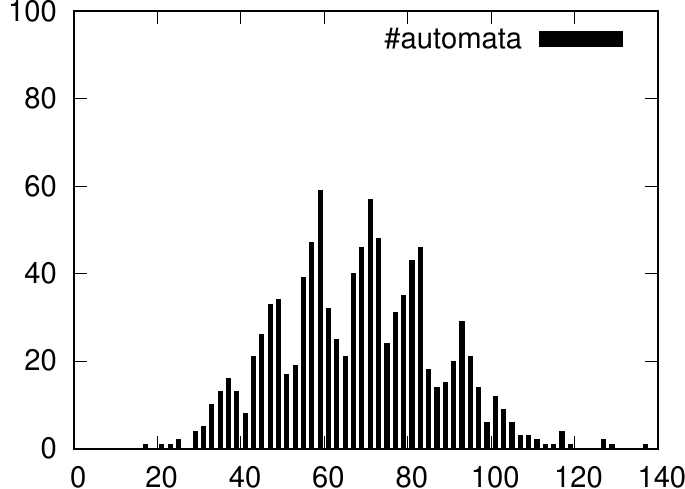}
\end{subfigure}~
\begin{subfigure}[b]{0.45\textwidth}
\includegraphics[width=\textwidth]{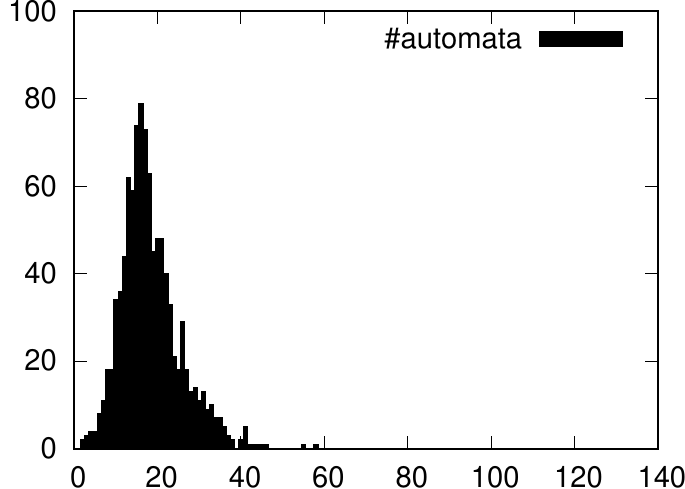}
\end{subfigure}
\caption{Number of state orbits before (left) and after (right) minimisation for formula automata of size $15$ and dimension $3$. Each figure shows $1000$ automata. The striped pattern on the left is a consequence of the fact that the used settings only generate automata with an odd number of orbits.}\label{fig:for3}
\end{figure}

In contrast, the data for the formula automata (see Figures~\ref{fig:for1} and~\ref{fig:for3}) shows a much broader distribution, generating automata that use significantly more orbits than needed for the recognised languages.

Given this, we think that our test cases are varied enough to give a decent representation of the performance of the various libraries.

\subsection{Minimisation Results}
For \NLambda{} and \LOIS{} we used the implementations of Moore's minimisation algorithm from the original papers~\cite{EPTCS207.3,kopczynski1716,kopczynski2017}.
For each of the libraries, we wrote routines to read in an automaton from a file and,
for the structured test cases, to generate the requested automaton.
For \ONS{}, all automata were read from file.
The output of these programs was manually checked to see if the minimisation was performed correctly.

The results in Table~\ref{minimize_results} show a clear advantage for \ONS{} for random automata.
The library is capable of running all supplied test cases in less than one second.
This in contrast to both \LOIS{} and \NLambda{}, which take more than $2$ hours on the largest random automata.

\begin{table}[]
\centering
\begin{tabular}{l|rrr|rrrr}
Model        & \multicolumn{1}{c}{$N$} & \multicolumn{1}{c}{$L$} & \multicolumn{1}{c|}{$\dim$} & \multicolumn{1}{c}{\ONS{} (s)} & \multicolumn{1}{c}{\ONShs{} (s)} & \multicolumn{1}{c}{\NLambda{} (s)} & \multicolumn{1}{c}{\LOIS{} (s)} \\ \hline
Random       & 5     &       & $\leq$ 1  & 0.00     & 0.00     & 0.05    & 0.71     \\
Random       & 10    &       & $\leq$ 1  & 0.00     & 0.00     & 0.93    & 12.31    \\
Random       & 10    &       & $\leq$ 2  & 0.01     & 0.00     & 26.60   & $>$ 2h   \\
Random       & 15    &       & $\leq$ 1  & 0.00     & 0.00     & 5.01    & 41.31    \\
Random       & 15    &       & $\leq$ 2  & 0.01     & 0.00     & 61.98   & $>$ 2h   \\
Random       & 15    &       & $\leq$ 3  & 0.04     & 0.02     & 418.22  & $>$ 2h   \\ \hline
Formula      &       & 5     & $\leq$ 2  & 0.00     & 0.01     & 0.25    &          \\
Formula      &       & 10    & $\leq$ 2  & 0.01     & 0.02     & 1.52    &          \\
Formula      &       & 10    & $\leq$ 3  & 0.62     & 0.57     & 1.52    &          \\
Formula      &       & 25    & $\leq$ 2  & 0.06     & 0.05     & 48.81   &          \\
Formula      &       & 25    & $\leq$ 3  & 2.89     & 1.58     & 108.46  &          \\
Formula      &       & 25    & $\leq$ 5  & $>$ 2h   & 2176.83  & 255.39  &          \\ \hline
FIFO($1$)    & 4     & 4     & 1         & 0.02     & 0.00     & 0.01    & 0.03     \\
FIFO($2$)    & 13    & 7     & 2         & 0.01     & 0.01     & 1.03    & 0.24     \\
FIFO($3$)    & 65    & 11    & 3         & 0.35     & 0.70     & 9.32    & 2.44     \\
FIFO($4$)    & 440   & 16    & 4         & 37.60    & 39.77    & 68.44   & 15.33    \\
FIFO($5$)    & 3686  & 22    & 5         & $>$ 2h   & 3027.54  & 382.33  & 71.59    \\ \hline
$ww(1)$       & 4     & 4     & 1         & 0.02     & 0.00     & 0.01    & 0.03     \\
$ww(2)$       & 8     & 6     & 2         & 0.00     & 0.00     & 0.12    & 0.03     \\
$ww(3)$       & 24    & 8     & 3         & 0.16     & 0.17     & 0.75    & 0.16     \\
$ww(4)$       & 112   & 10    & 4         & 23.71    & 30.51    & 2.85    & 0.60     \\
$ww(5)$       & 728   & 12    & 5         & 5880.04  & $>$ 2h   & 9.27    & 1.83     \\ \hline
$\Lmax$      & 5     &       & 2         & 0.02     & 0.00     & 1.08    & 0.06     \\
$\Lint$      & 5     &       & 2         & 0.01     & 0.00     & 1.71    & 0.04
\end{tabular}
\caption{Running times for Algorithm~\ref{alg:moore}.
$\Nsize$ ($\dim$) is the size (resp. dimension) of the input.
The column $L$ denotes the number of locations, if the automaton can be expressed symbolically.
The first two sets of automata (`Random' and `Formula') are the randomly generated automata.
Each rows consists of 10 automata and we report the average runtime.
If one of the runs times out, the cell indicates $>$ 2h.}
\label{minimize_results}
\end{table}

The results for structured automata show a clear effect of the extra structure.
Both \NLambda{} and \LOIS{} remain capable of minimising the automata in reasonable amounts of time for larger sizes.
In contrast, \ONS{} benefits little from the extra structure. Despite this, it remains viable: even for the larger cases it falls behind significantly only for the largest FIFO automaton and the two largest $ww$ automata.

The random automata with formulae show a mixed bag (as expected).
(We have not implemented this for LOIS.)
We note that when the number of locations grow, \NLambda{} becomes rather slow.
Nonetheless, \NLambda{} catches up when increasing the dimension.
All the results show that \ONSboth{} is faster in lower dimensions, even with a high number of orbits, but that \NLambda{} and \LOIS{} can handle higher dimensions.

The libraries \ONS{} and \ONShs{} are very comparable in terms of scalability.
However, we note that \ONShs{} is sometimes faster, we expect this is due to the lazy nature of Haskell:
When iterating through a set (especially those formed by products), breaking early means that the remainder of the set does not need to be constructed.

\subsection{Learning Results}
\label{sec:learning-results}
Both implementations in \NLambda{} and \ONS{} are direct implementations of the pseudocode for \nLStar{} with
no further optimisations.
The authors of \LOIS{} implemented \nLStar{} in their library as well.\footnote{Can be found on \url{github.com/eryxcc/lois/blob/master/tests/learning.cpp}}
They reported similar performance as the implementation in \NLambda{} (private communication).
Hence we focus our comparison on \NLambda{} and \ONS{}.
We use the variant of \nLStar{} where counterexamples are added as columns instead of prefixes.

The implementation in \NLambda{} has the benefit that it can work with different symmetries.
Indeed, the structured examples, FIFO and $ww$, are equivariant w.r.t.\ the equality symmetry as well as the total order symmetry.
For that reason, we run the \NLambda{} implementation using both the equality symmetry and the total order symmetry on those languages.
For the languages $\Lmax$, $\Lint$ and the random automata, we can only use the total order symmetry.

To run the \nLStar{} algorithm, we implement an external oracle for the membership queries.
This is akin to the application of learning black box systems~\cite{Vaandrager17}.
For equivalence queries, we constructed counterexamples by hand.
All implementations receive the same counterexamples.
We measure CPU time instead of real time, so that we do not account for the external oracle.

The results in Table~\ref{learning_results} show an advantage for \ONS{} for random automata.
Additionally, we report the number of membership queries,
which can vary for each implementation as some steps in the algorithm depend on the internal ordering of set data structures.

We have not benchmarked the learning algorithm with the random automata with formulae.
The reason is that the logical information, the formulae, are not even given to the learning algorithm, since the learning algorithm can only query the language.

In contrast to the case of minimisation,
the results suggest that \NLambda{} cannot exploit the logical structure of FIFO$(n)$, $\Lmax$ and $\Lint$ as it is not provided a priori.
For $ww(2)$ we inspected the output on \NLambda{} and saw that it learned some logical structure.
For example, it outputs $\{ (a, b) \mid a \neq b \}$ as a single object instead of two orbits $\{ (a, b) \mid a < b \}$ and $\{ (a, b) \mid b < a \}$.
This may explain why \NLambda{} is still competitive.
For languages which are equivariant for the equality symmetry,
the \NLambda{} implementation using the equality symmetry
can learn with much fewer queries.
This is expected as the automata themselves have fewer orbits.
It is interesting to see that these languages can be learned more efficiently by choosing the right symmetry.

\begin{table}
\makebox[\textwidth][c]{{\scriptsize
\begin{tabular}{l|rr|rr|rr|rr|rr}
        & & & \multicolumn{2}{l|}{\ONS{}} & \multicolumn{2}{l|}{\ONShs{}} & \multicolumn{2}{l|}{\NLambda{}$^{\mathit{ord}}$} & \multicolumn{2}{l}{\NLambda{}$^{\mathit{eq}}$} \\
Model     & $N$ & $\dim$ & time (s) & MQs     & time (s)& MQs     & time (s) & MQs   & time (s) & MQs   \\ \hline
Random    & 4   & 1      & 127.47   & 2321    & 6.94    & 1915    & 2391.08  & 1243  &          &       \\
Random    & 5   & 1      & 0.12     & 404     & 0.08    & 404     & 2433.77  & 435   &          &       \\
Random    & 3   & 0      & 0.86     & 499     & 0.14    & 470     & 1818.97  & 422   &          &       \\
Random    & 5   & 1      & $>$ 1h   &         & 192.18  & 6870    & $>$ 1h   &       &          &       \\
Random    & 4   & 1      & 0.08     & 387     & 0.06    & 387     & 2097.43  & 387   &          &       \\ \hline
FIFO($1$) & 3   & 1      & 0.04     & 119     & 0.01    & 119     & 3.17     & 119   & 1.76     & 51    \\
FIFO($2$) & 6   & 2      & 1.73     & 2655    & 0.55    & 2655    & 391.89   & 3818  & 40.00    & 434   \\
FIFO($3$) & 19  & 3      & 2793.93  & 298400  & 451.67  & 302868  & $>$ 1h   &       & 2047.32  & 8151  \\ \hline
$ww(1)$    & 4   & 1      & 0.42     & 134     & 0.04    & 111     & 2.49     & 77    & 1.47     & 30    \\
$ww(2)$    & 8   & 2      & 265.79   & 3671    & 14.30   & 2317    & 227.66   & 2140  & 30.58    & 237   \\
$ww(3)$    & 24  & 3      & $>$ 1h   &         & $>$ 1h  &         & $>$ 1h   &       & $>$ 1h   &       \\ \hline
$\Lmax$   & 3   & 1      & 0.01     & 54      & 0.01    & 54      & 3.58     & 54    &          &       \\
$\Lint$   & 5   & 2      & 0.59     & 478     & 0.17    & 478     & 83.26    & 478   &          &
\end{tabular}
}}
\caption{Running times and number of membership queries for the \nLStar{} algorithm. For \NLambda{} we used two version: \NLambda{}$^{\mathit{ord}}$ uses the total order symmetry \NLambda{}$^{\mathit{eq}}$ uses the equality symmetry.}
\label{learning_results}
\end{table}

\section{Related work}
\label{sec:related}

As stated in the introduction, \NLambda{} \cite{EPTCS207.3} and \LOIS{} \cite{kopczynski1716} use first-order formulas to represent nominal sets and use SMT solvers to manipulate them.
This makes both libraries very flexible and they indeed implement the equality symmetry as well as the total order symmetry.
As their representation is not unique, the efficiency depends on how the logical formulas are constructed.
As such, they do not provide complexity results.
In contrast, our direct representation allows for complexity results (Section~\ref{sec:impl}) and leads to different performance characteristics (Section~\ref{sec:evaluation}).

A second big difference is that both \NLambda{} and \LOIS{} implement a ``programming paradigm'' instead of just a library.
This means that they overload natural programming constructs in their host languages (Haskell and \cplusplus{} respectively).
For programmers this means they can think of infinite sets without having to know about nominal sets.

It is worth mentioning that an older (unreleased) version of \NLambda{} implemented nominal sets with orbits instead of SMT solvers \cite{towardsnominal}.
However, instead of characterising orbits (e.g., by its dimension), they represent orbits by a representative element.
The authors of \NLambda{} have reported that the current version is faster \cite{EPTCS207.3}.

The theoretical foundation of our work is the main representation theorem in \cite{bojanczyk2014automata}.
We add to that by instantiating it to the total order symmetry and distil a concrete representation of nominal sets.
As far as we know, we provide the first implementation of the representation theory in \cite{bojanczyk2014automata}.

Another tool using nominal sets is Mihda~\cite{ferrari2005}, where only the equality symmetry is implemented.
This tool translates the $\pi$-calculus to history-dependent automata (HD-automata), with the aim of minimisation and checking bisimilarity.
The implementation in OCaml is based on \emph{named sets}, which are finite representations for nominal sets.
The theory of named sets is well-studied and has been used to model various behavioural models with local names.
For those results, the categorical equivalences between named sets, nominal sets and a certain (pre)sheaf category are exploited~\cite{CianciaKM10, CianciaM10, Staton07}.
In particular, a finite representation similar to ours is presented in~\cite{CianciaKM10}, see Section~\ref{sec:rel-work-ciancia}.
The total order symmetry is not mentioned in their work.

Fresh OCaml \cite{shinwell2005fresh} and Nominal Isabelle \cite{urban2005nominal} are
both specialised in name-binding and $\alpha$-conversion used in proof systems.
They only use the equality symmetry and do not provide a library for manipulating nominal sets.
Hence they are not suited for our applications.

On the theoretical side, there are many complexity results for register automata \cite{DBLP:journals/corr/abs-1209-0680,DBLP:conf/lics/MurawskiRT15}.
In particular, we note that problems such as emptiness and equivalence are NP-hard depending on the type of register automaton.
This does not easily compare to our complexity results for minimisation.
One difference is that we use the total order symmetry, where the local symmetries are always trivial (Lemma~\ref{group_trivial}).
As a consequence, all the complexity required to deal with groups vanishes.
Rather, the complexity is transferred to the input of our algorithms, because automata over the equality symmetry require more orbits when expressed over the total order symmetry.
Another difference is that register automata allow for duplicate values in the registers.
In nominal automata, such configurations will be encoded in different orbits.
An interesting open problem is whether equivalence of unique-valued register automata is in {\scshape Ptime} \cite{DBLP:conf/lics/MurawskiRT15}.

Orthogonal to nominal automata, there is the notion of symbolic automata \cite{DAntoniV17,maler2017generic}.
These automata are also defined over infinite alphabets but they use predicates on transitions, instead of relying on symmetries.
Symbolic automata are finite state (as opposed to infinite state nominal automata) and do not allow for storing values.
However, they do allow for general predicates over an infinite alphabet, including comparison to constants.

\section{Conclusion and Future Work}
\label{sec:fw}

We presented a concrete finite representation for nominal sets over the total order symmetry.
This allowed us to implement a library, \ONS{}, and provide complexity bounds for common operations.
The experimental comparison of \ONS{} against existing solutions for automata minimisation and 
learning show that our implementation is much faster in many instances. 
As such, we believe \ONS{} is a promising implementation of nominal techniques.

A natural direction for future work is to consider other symmetries,
such as the equality symmetry. Here, we may take inspiration from existing tools such as Mihda (see Section~\ref{sec:related}).
Another interesting question is whether it is possible to translate a nominal automaton
over the total order symmetry which accepts an equality language to an
automaton over the equality symmetry.
This would allow one to efficiently move between symmetries.
Finally, our techniques can potentially be applied
to timed automata by exploiting the
intriguing connection between
the nominal automata that we consider and timed automata~\cite{BojanczykL12}.

\subsection*{Acknowledgements}
We would like to thank Szymon Toru\'{n}czyk and Eryk Kopczy\'{n}ski for their prompt help when using the \LOIS{} library.
For general comments and suggestions we would like to thank Ugo Montanari and Niels van der Weide.
At last, we want to thank the anonymous reviewers of ICTAC 2018 and 
this extended paper for their comments.

\bibliographystyle{elsarticle-num}
\bibliography{fastcalc}

\begin{thebibliography}{10}
\expandafter\ifx\csname url\endcsname\relax
  \def\url#1{\texttt{#1}}\fi
\expandafter\ifx\csname urlprefix\endcsname\relax\def\urlprefix{URL }\fi
\expandafter\ifx\csname href\endcsname\relax
  \def\href#1#2{#2} \def\path#1{#1}\fi

\bibitem{VenhoekMR18}
D.~Venhoek, J.~Moerman, J.~Rot, Fast computations on ordered nominal sets, in:
  B.~Fischer, T.~Uustalu (Eds.), Theoretical Aspects of Computing - {ICTAC}
  2018 - 15th International Colloquium, Proceedings, Vol. 11187 of Lecture
  Notes in Computer Science, Springer, 2018, pp. 493--512 (2018).
\newblock \href {https://doi.org/10.1007/978-3-030-02508-3_26}
  {\path{doi:10.1007/978-3-030-02508-3_26}}.

\bibitem{KaminskiF94}
M.~Kaminski, N.~Francez, Finite-memory automata, Theor. Comput. Sci. 134~(2)
  (1994) 329--363 (1994).
\newblock \href {https://doi.org/10.1016/0304-3975(94)90242-9}
  {\path{doi:10.1016/0304-3975(94)90242-9}}.

\bibitem{bojanczyk2014automata}
M.~Boja{\'{n}}czyk, B.~Klin, S.~Lasota, Automata theory in nominal sets,
  Logical Methods in Computer Science 10~(3) (2014).
\newblock \href {https://doi.org/10.2168/LMCS-10(3:4)2014}
  {\path{doi:10.2168/LMCS-10(3:4)2014}}.

\bibitem{DAntoniV17}
L.~D'Antoni, M.~Veanes, The power of symbolic automata and transducers, in:
  R.~Majumdar, V.~Kuncak (Eds.), Computer Aided Verification - 29th
  International Conference, {CAV} 2017, Proceedings, Part~{I}, Vol. 10426 of
  Lecture Notes in Computer Science, Springer, 2017, pp. 47--67 (2017).
\newblock \href {https://doi.org/10.1007/978-3-319-63387-9_3}
  {\path{doi:10.1007/978-3-319-63387-9_3}}.

\bibitem{DBLP:journals/corr/abs-1209-0680}
R.~Grigore, N.~Tzevelekos, History-register automata, Logical Methods in
  Computer Science 12~(1) (2016).
\newblock \href {https://doi.org/10.2168/LMCS-12(1:7)2016}
  {\path{doi:10.2168/LMCS-12(1:7)2016}}.

\bibitem{MontanariP97}
U.~Montanari, M.~Pistore, An introduction to history dependent automata,
  Electr. Notes Theor. Comput. Sci. 10 (1998) 170--188 (1998).
\newblock \href {https://doi.org/10.1016/S1571-0661(05)80696-6}
  {\path{doi:10.1016/S1571-0661(05)80696-6}}.

\bibitem{Segoufin06}
L.~Segoufin, Automata and logics for words and trees over an infinite alphabet,
  in: Z.~{\'{E}}sik (Ed.), Computer Science Logic, 20th International Workshop,
  {CSL} 2006, 15th Annual Conference of the EACSL, Proceedings, Vol. 4207 of
  Lecture Notes in Computer Science, Springer, 2006, pp. 41--57 (2006).
\newblock \href {https://doi.org/10.1007/11874683_3}
  {\path{doi:10.1007/11874683_3}}.

\bibitem{AartsFKV15}
F.~Aarts, P.~Fiter\u{a}u{-}Bro\c{s}tean, H.~Kuppens, F.~W. Vaandrager, Learning
  register automata with fresh value generation, in: M.~Leucker, C.~Rueda,
  F.~D. Valencia (Eds.), Theoretical Aspects of Computing - {ICTAC} 2015 - 12th
  International Colloquium, Proceedings, Vol. 9399 of Lecture Notes in Computer
  Science, Springer, 2015, pp. 165--183 (2015).
\newblock \href {https://doi.org/10.1007/978-3-319-25150-9_11}
  {\path{doi:10.1007/978-3-319-25150-9_11}}.

\bibitem{BolligHLM13}
B.~Bollig, P.~Habermehl, M.~Leucker, B.~Monmege, A fresh approach to learning
  register automata, in: M.~B{\'{e}}al, O.~Carton (Eds.), Developments in
  Language Theory - 17th International Conference, {DLT} 2013, Proceedings,
  Vol. 7907 of Lecture Notes in Computer Science, Springer, 2013, pp. 118--130
  (2013).
\newblock \href {https://doi.org/10.1007/978-3-642-38771-5_12}
  {\path{doi:10.1007/978-3-642-38771-5_12}}.

\bibitem{CasselHJS16}
S.~Cassel, F.~Howar, B.~Jonsson, B.~Steffen, Active learning for extended
  finite state machines, Formal Asp. Comput. 28~(2) (2016) 233--263 (2016).
\newblock \href {https://doi.org/10.1007/s00165-016-0355-5}
  {\path{doi:10.1007/s00165-016-0355-5}}.

\bibitem{DrewsD17}
S.~Drews, L.~D'Antoni, Learning symbolic automata, in: A.~Legay, T.~Margaria
  (Eds.), Tools and Algorithms for the Construction and Analysis of Systems -
  23rd International Conference, {TACAS} 2017, Proceedings, Part~{I}, Vol.
  10205 of Lecture Notes in Computer Science, 2017, pp. 173--189 (2017).
\newblock \href {https://doi.org/10.1007/978-3-662-54577-5_10}
  {\path{doi:10.1007/978-3-662-54577-5_10}}.

\bibitem{moerman2017learning}
J.~Moerman, M.~Sammartino, A.~Silva, B.~Klin, M.~Szynwelski, Learning nominal
  automata, in: G.~Castagna, A.~D. Gordon (Eds.), Proceedings of the 44th {ACM}
  {SIGPLAN} Symposium on Principles of Programming Languages, {POPL} 2017,
  {ACM}, 2017, pp. 613--625 (2017).
\newblock \href {https://doi.org/10.1145/3009837.3009879}
  {\path{doi:10.1145/3009837.3009879}}.

\bibitem{Vaandrager17}
F.~W. Vaandrager, Model learning, Commun. {ACM} 60~(2) (2017) 86--95 (2017).
\newblock \href {https://doi.org/10.1145/2967606} {\path{doi:10.1145/2967606}}.

\bibitem{Fiterau-Brostean16}
P.~Fiter\u{a}u{-}Bro\c{s}tean, R.~Janssen, F.~W. Vaandrager, Combining model
  learning and model checking to analyze {TCP} implementations, in:
  S.~Chaudhuri, A.~Farzan (Eds.), Computer Aided Verification - 28th
  International Conference, {CAV} 2016, Proceedings, Part~{II}, Vol. 9780 of
  Lecture Notes in Computer Science, Springer, 2016, pp. 454--471 (2016).
\newblock \href {https://doi.org/10.1007/978-3-319-41540-6_25}
  {\path{doi:10.1007/978-3-319-41540-6_25}}.

\bibitem{GabbayP02}
M.~Gabbay, A.~M. Pitts, A new approach to abstract syntax with variable
  binding, Formal Asp. Comput. 13~(3-5) (2002) 341--363 (2002).
\newblock \href {https://doi.org/10.1007/s001650200016}
  {\path{doi:10.1007/s001650200016}}.

\bibitem{pitts2016nominal}
A.~M. Pitts, Nominal techniques, {SIGLOG} News 3~(1) (2016) 57--72 (2016).
\newblock \href {https://doi.org/10.1145/2893582.2893594}
  {\path{doi:10.1145/2893582.2893594}}.

\bibitem{pitts2013nominal}
A.~M. Pitts, Nominal Sets: Names and Symmetry in Computer Science, Cambridge
  Tracts in Theoretical Computer Science, Cambridge University Press, 2013
  (2013).
\newblock \href {https://doi.org/10.1017/CBO9781139084673}
  {\path{doi:10.1017/CBO9781139084673}}.

\bibitem{Bojanczyk18}
M.~Boja{\'{n}}czyk,
  \href{https://www.mimuw.edu.pl/~bojan/upload/main-6.pdf}{Slightly Infinite
  Sets}, Draft December 4, 2018 (2018).
\newline\urlprefix\url{https://www.mimuw.edu.pl/~bojan/upload/main-6.pdf}

\bibitem{EPTCS207.3}
B.~Klin, M.~Szynwelski, {SMT} solving for functional programming over infinite
  structures, in: R.~Atkey, N.~R. Krishnaswami (Eds.), Proceedings 6th Workshop
  on Mathematically Structured Functional Programming, MSFP@ETAPS 2016, Vol.
  207 of {EPTCS}, Open Publishing Association, 2016, pp. 57--75 (2016).
\newblock \href {https://doi.org/10.4204/EPTCS.207.3}
  {\path{doi:10.4204/EPTCS.207.3}}.

\bibitem{kopczynski1716}
E.~Kopczy{\'{n}}ski, S.~Toru{\'{n}}czyk,
  \href{http://ceur-ws.org/Vol-1617/paper5.pdf}{{LOIS:} an application of {SMT}
  solvers}, in: T.~King, R.~Piskac (Eds.), Proceedings of the 14th
  International Workshop on Satisfiability Modulo Theories, SMT@IJCAR 2016,
  Vol. 1617 of {CEUR} Workshop Proceedings, CEUR-WS.org, 2016, pp. 51--60
  (2016).
\newline\urlprefix\url{http://ceur-ws.org/Vol-1617/paper5.pdf}

\bibitem{kopczynski2017}
E.~Kopczy{\'{n}}ski, S.~Toru{\'{n}}czyk, {LOIS:} syntax and semantics, in:
  G.~Castagna, A.~D. Gordon (Eds.), Proceedings of the 44th {ACM} {SIGPLAN}
  Symposium on Principles of Programming Languages, {POPL} 2017, {ACM}, 2017,
  pp. 586--598 (2017).
\newblock \href {https://doi.org/10.1145/3009837.3009876}
  {\path{doi:10.1145/3009837.3009876}}.

\bibitem{LazicNORW08}
R.~Lazi{\'{c}}, T.~C. Newcomb, J.~Ouaknine, A.~W. Roscoe, J.~Worrell,
  \href{https://ip.ios.semcs.net/articles/fundamenta-informaticae/fi88-3-03}{Nets
  with tokens which carry data}, Fundam. Informaticae 88~(3) (2008) 251--274
  (2008).
\newline\urlprefix\url{https://ip.ios.semcs.net/articles/fundamenta-informaticae/fi88-3-03}

\bibitem{LasotaP18}
S.~Lasota, R.~Pi{\'{o}}rkowski, {WQO} dichotomy for 3-graphs, in: C.~Baier,
  U.~D. Lago (Eds.), Foundations of Software Science and Computation Structures
  - 21st International Conference, {FOSSACS} 2018, Proceedings, Vol. 10803 of
  Lecture Notes in Computer Science, Springer, 2018, pp. 548--564 (2018).
\newblock \href {https://doi.org/10.1007/978-3-319-89366-2\_30}
  {\path{doi:10.1007/978-3-319-89366-2\_30}}.

\bibitem{CianciaKM10}
V.~Ciancia, A.~Kurz, U.~Montanari, Families of symmetries as efficient models
  of resource binding, Electr. Notes Theor. Comput. Sci. 264~(2) (2010) 63--81
  (2010).
\newblock \href {https://doi.org/10.1016/j.entcs.2010.07.014}
  {\path{doi:10.1016/j.entcs.2010.07.014}}.

\bibitem{BojanczykL12}
M.~Boja{\'{n}}czyk, S.~Lasota, A machine-independent characterization of timed
  languages, in: A.~Czumaj, K.~Mehlhorn, A.~M. Pitts, R.~Wattenhofer (Eds.),
  Automata, Languages, and Programming - 39th International Colloquium, {ICALP}
  2012, Proceedings, Part~{II}, Vol. 7392 of Lecture Notes in Computer Science,
  Springer, 2012, pp. 92--103 (2012).
\newblock \href {https://doi.org/10.1007/978-3-642-31585-5_12}
  {\path{doi:10.1007/978-3-642-31585-5_12}}.

\bibitem{Angluin87}
D.~Angluin, Learning regular sets from queries and counterexamples, Inf.
  Comput. 75~(2) (1987) 87--106 (1987).
\newblock \href {https://doi.org/10.1016/0890-5401(87)90052-6}
  {\path{doi:10.1016/0890-5401(87)90052-6}}.

\bibitem{Fiterau-Brostean18}
P.~Fiter\u{a}u{-}Bro\c{s}tean, \href{http://hdl.handle.net/2066/187331}{Active
  model learning for the analysis of network protocols}, Ph.D. thesis, Radboud
  University, Nijmegen, The Netherlands (2018).
\newline\urlprefix\url{http://hdl.handle.net/2066/187331}

\bibitem{MouraB08}
L.~M. de~Moura, N.~Bj{\o}rner, {Z3:} an efficient {SMT} solver, in: C.~R.
  Ramakrishnan, J.~Rehof (Eds.), Tools and Algorithms for the Construction and
  Analysis of Systems, 14th International Conference, {TACAS} 2008,
  Proceedings, Vol. 4963 of Lecture Notes in Computer Science, Springer, 2008,
  pp. 337--340 (2008).
\newblock \href {https://doi.org/10.1007/978-3-540-78800-3\_24}
  {\path{doi:10.1007/978-3-540-78800-3\_24}}.

\bibitem{Okasaki99}
C.~Okasaki, Purely functional data structures, Cambridge University Press, 1999
  (1999).
\newblock \href {https://doi.org/10.1017/CBO9780511530104}
  {\path{doi:10.1017/CBO9780511530104}}.

\bibitem{towardsnominal}
M.~Boja{\'{n}}czyk, L.~Braud, B.~Klin, S.~Lasota, Towards nominal computation,
  in: J.~Field, M.~Hicks (Eds.), Proceedings of the 39th {ACM} {SIGPLAN-SIGACT}
  Symposium on Principles of Programming Languages, {POPL} 2012, {ACM}, 2012,
  pp. 401--412 (2012).
\newblock \href {https://doi.org/10.1145/2103656.2103704}
  {\path{doi:10.1145/2103656.2103704}}.

\bibitem{ferrari2005}
G.~L. Ferrari, U.~Montanari, E.~Tuosto, Coalgebraic minimization of hd-automata
  for the pi-calculus using polymorphic types, Theor. Comput. Sci. 331~(2-3)
  (2005) 325--365 (2005).
\newblock \href {https://doi.org/10.1016/j.tcs.2004.09.021}
  {\path{doi:10.1016/j.tcs.2004.09.021}}.

\bibitem{CianciaM10}
V.~Ciancia, U.~Montanari, Symmetries, local names and dynamic (de)-allocation
  of names, Inf. Comput. 208~(12) (2010) 1349--1367 (2010).
\newblock \href {https://doi.org/10.1016/j.ic.2009.10.007}
  {\path{doi:10.1016/j.ic.2009.10.007}}.

\bibitem{Staton07}
S.~Staton,
  \href{https://www.cl.cam.ac.uk/techreports/UCAM-CL-TR-688.pdf}{Name-passing
  process calculi: operational models and structural operational semantics},
  Ph.D. thesis, University of Cambridge, {UK} (2007).
\newline\urlprefix\url{https://www.cl.cam.ac.uk/techreports/UCAM-CL-TR-688.pdf}

\bibitem{shinwell2005fresh}
M.~R. Shinwell, A.~M. Pitts,
  \href{http://www.fresh-ocaml.org/UCAM-CL-TR-621.pdf}{Fresh objective {C}aml
  user manual}, Tech. rep., University of Cambridge, Computer Laboratory
  (2005).
\newline\urlprefix\url{http://www.fresh-ocaml.org/UCAM-CL-TR-621.pdf}

\bibitem{urban2005nominal}
C.~Urban, C.~Tasson, Automata and logics for words and trees over an infinite
  alphabet, in: Z.~{\'{E}}sik (Ed.), Computer Science Logic, 20th International
  Workshop, {CSL} 2006, 15th Annual Conference of the EACSL, Proceedings, Vol.
  3632 of Lecture Notes in Computer Science, Springer, 2005, pp. 38--53 (2005).
\newblock \href {https://doi.org/10.1007/11532231_4}
  {\path{doi:10.1007/11532231_4}}.

\bibitem{DBLP:conf/lics/MurawskiRT15}
A.~S. Murawski, S.~J. Ramsay, N.~Tzevelekos, Bisimilarity in fresh-register
  automata, in: 30th Annual {ACM/IEEE} Symposium on Logic in Computer Science,
  {LICS} 2015, {IEEE} Computer Society, 2015, pp. 156--167 (2015).
\newblock \href {https://doi.org/10.1109/LICS.2015.24}
  {\path{doi:10.1109/LICS.2015.24}}.

\bibitem{maler2017generic}
O.~Maler, I.~Mens, A generic algorithm for learning symbolic automata from
  membership queries, in: L.~Aceto, G.~Bacci, G.~Bacci,
  A.~Ing{\'{o}}lfsd{\'{o}}ttir, A.~Legay, R.~Mardare (Eds.), Models,
  Algorithms, Logics and Tools - Essays Dedicated to Kim Guldstrand Larsen on
  the Occasion of His 60th Birthday, Vol. 10460, Springer, 2017, pp. 146--169
  (2017).
\newblock \href {https://doi.org/10.1007/978-3-319-63121-9_8}
  {\path{doi:10.1007/978-3-319-63121-9_8}}.

\end{thebibliography}

\end{document}